\def\noheaderplainsetup{

\topmargin=0pt \headheight=0pt \headsep=0pt  \oddsidemargin=0pt \evensidemargin=0pt  \textheight=8.9truein \textwidth=6.2truein}   
\begin{document}


\newcommand{\atime}{\mathbb{AT}}
\newcommand{\xx}{\wp}
\newcommand{\col}[1]{\mbox{$#1$:}}

\newcommand{\seq}[1]{\langle #1 \rangle}           


\newcommand{\mla}{\mbox{{\Large $\wedge$}}}
\newcommand{\mle}{\mbox{{\Large $\vee$}}}

\newcommand{\pst}{\mbox{\raisebox{-0.01cm}{\scriptsize $\wedge$}\hspace{-4pt}\raisebox{0.16cm}{\tiny $\mid$}\hspace{2pt}}}
\newcommand{\gneg}{\neg}                  
\newcommand{\mli}{\rightarrow}                     
\newcommand{\cla}{\mbox{\large $\forall$}}      
\newcommand{\cle}{\mbox{\large $\exists$}}        
\newcommand{\mld}{\vee}    
\newcommand{\mlc}{\wedge}  
\newcommand{\ade}{\mbox{\Large $\sqcup$}}      
\newcommand{\ada}{\mbox{\Large $\sqcap$}}      
\newcommand{\add}{\mbox{\large $\sqcup$}}                      
\newcommand{\adc}{\mbox{\large $\sqcap$}}                      

\newcommand{\tlg}{\bot}               
\newcommand{\oo}{\bot}               
\newcommand{\twg}{\top}               
\newcommand{\pp}{\top}               
\newcommand{\st}{\mbox{\raisebox{-0.05cm}{$\circ$}\hspace{-0.13cm}\raisebox{0.16cm}{\tiny $\mid$}\hspace{2pt}}}
\newcommand{\bst}{\st^{\aleph_0}}
\newcommand{\cost}{\mbox{\raisebox{0.12cm}{$\circ$}\hspace{-0.13cm}\raisebox{0.02cm}{\tiny $\mid$}\hspace{2pt}}}
\newcommand{\bcost}{\cost^{\aleph_0}}
\newcommand{\pcost}{\mbox{\raisebox{0.12cm}{\scriptsize $\vee$}\hspace{-4pt}\raisebox{0.02cm}{\tiny $\mid$}\hspace{2pt}}}


\newtheorem{theoremm}{Theorem}[section]
\newtheorem{thesiss}[theoremm]{Thesis}
\newtheorem{definitionn}[theoremm]{Definition}
\newtheorem{lemmaa}[theoremm]{Lemma}
\newtheorem{propositionn}[theoremm]{Proposition}
\newtheorem{conventionn}[theoremm]{Convention}
\newtheorem{examplee}[theoremm]{Example}
\newtheorem{remarkk}[theoremm]{Remark}
\newtheorem{factt}[theoremm]{Fact}
\newtheorem{exercisee}[theoremm]{Exercise}

\newenvironment{exercise}{\begin{exercisee} \em}{ \end{exercisee}}
\newenvironment{definition}{\begin{definitionn} \em}{ \end{definitionn}}
\newenvironment{theorem}{\begin{theoremm}}{\end{theoremm}}
\newenvironment{lemma}{\begin{lemmaa}}{\end{lemmaa}}
\newenvironment{proposition}{\begin{propositionn} }{\end{propositionn}}
\newenvironment{convention}{\begin{conventionn} \em}{\end{conventionn}}
\newenvironment{remark}{\begin{remarkk} \em}{\end{remarkk}}
\newenvironment{proof}{ {\bf Proof.} }{\  $\Box$ \vspace{.1in} }
\newenvironment{example}{\begin{examplee} \em}{\end{examplee}}
\newenvironment{fact}{\begin{factt}}{\end{factt}}

\title{Separating the basic logics of the basic recurrences}
\author{Giorgi Japaridze\thanks{Supported by 2010 Summer Research Fellowship from Villanova University}}
\date{}
\maketitle

\begin{abstract} This paper shows that, even at the most basic level (namely, in combination with only $\gneg,\mlc,\mld$), the parallel, countable branching and uncountable branching recurrences of computability logic validate different principles.
\end{abstract}

\noindent {\em MSC}: primary: 03B47; secondary: 03B70; 68Q10; 68T27; 68T15. 

\  

\noindent {\em Keywords}: Computability logic; Game semantics; Recurrence operators 

\tableofcontents

\section{Introduction} {\em Computability logic} (CoL) is a long-term project for redeveloping logic on the basis of a constructive game semantics. 
The approach induces a rich collection of logical operators, standing for various natural operations on games. Among those are {\em recurrence operators}, the most basic sorts of which are {\em parallel recurrence} $\pst$,  (uncountable) {\em branching recurrence} $\st$, and {\em countable branching recurrence}  $\bst$. Each recurrence operator $!$ comes with its dual $?$, defied by $?F=\gneg !\gneg F$. The present paper shows that the logical behaviors of these three sorts of recurrences are pairwise distinct --- that is, they validate different principles --- even at the most basic level, namely, in combination with only {\em negation} $\gneg$, {\em parallel conjunction} $\mlc$, and {\em parallel disjunction} $\mld$ (as always, either $\mld$ or $\mlc$, being definable from the other, can be dropped). 

Showing validity or non-validity of various principles in CoL tends to  be far from easy.  This is especially so for principles involving recurrence operators. Recent years (\cite{Japtocl1}-\cite{Cirq}, \cite{Japtcs}-\cite{Japfour}, \cite{Japtowards}-\cite{Ver} and more) have seen rapid and sustained progress in finding sound and complete axiomatizations for many, often quite expressive, fragments of CoL, at both the propositional and the first-order levels. Those fragments, however, have typically been recurrence-free.\footnote{The so called intuitionistic fragment of CoL, studied in \cite{Japjsl,Propint,Ver}, is the only exception. There, however, the usage of recurrence $!$ is limited to the very special form/context $!E\mli F$.} So, it would be accurate to say that, at this point, practically nothing is known about the logical behavior of recurrences, and finding  syntactic descriptions (such as axiomatizations) of the logics induced by them remains among the greatest challenges in the entire  CoL enterprise. The present paper attempts to 
bring some initial light  
 into this otherwise completely dark picture. Its results constitute a necessary first step on the presumably long road of syntactically taming recurrences:  
before even considering looking for axiomatizations, one needs to know whether to expect for those axiomatizations to be common or different for the various sorts of recurrences naturally emerging in game semantics. 

The logics induced by the three recurrences turn out to be separated by the following two principles, which we  call {\bf short production} and {\bf long production}, respectively: 
\begin{equation}\label{n1}
P\hspace{2pt}\mlc \hspace{2pt}!\hspace{1pt}(P\mli P\mlc P)\hspace{3pt}\mli \hspace{4pt}!\hspace{1pt}P;\vspace{0pt}
\end{equation}
\begin{equation}\label{n2}
P\hspace{2pt}\mlc \hspace{2pt}!\hspace{1pt}(P\mli P\mlc P)\hspace{2pt}\mlc\hspace{2pt} !\hspace{1pt}(P\mld P\mli P)\hspace{3pt}\mli\hspace{4pt} !\hspace{1pt}P.\vspace{7pt}
\end{equation} 
Namely, the situation is as shown in Figure 1, with  {\bf validity} throughout this paper  understood as what CoL calls {\em uniform} (as opposed to the weaker {\em multiform}) validity.\footnote{Extensionally, multiform validity (in most earlier papers on CoL simply called {\em validity}) typically coincides with uniform validity (\cite{Japtocl1}-\cite{Japtocl2},\cite{Japtcs}-\cite{Japseq},\cite{Japtoggling}), but tends to be harder to deal with in completeness proofs, even though a way of turning completeness proofs with respect to uniform validity into completeness proofs with respect to 
multiform validity appears to be more or less standard. Also, in all applications, it is uniform validity that matters, with multiform validity being  of purely theoretical interest. For these reasons, 
in the latest papers on CoL, including the present one, the interest has shifted towards uniform validity, in completeness proofs no longer addressing the question on multiform validity --- at least temporarily so.}

\begin{center} 
\begin{picture}(284,85)

\put(100,75){{when} $!=\pst$}

\put(165,75){{when} $!=\st^{\aleph_0}$}

\put(237,75){{when} $!=\st$}

\put(113,55){{\em valid}}

\put(175,55){{\em invalid}}

\put(245,55){{\em invalid}}

\put(113,35){{\em valid}}

\put(178,35){{\em valid}}

\put(245,35){{\em invalid}}

\put(0,55){short production}

\put(0,35){long production}

\put(120,10){\bf Figure 1}
\end{picture}
\end{center}

 This result is by no means obvious. One could have just as well expected that the differences between the three types of recurrences are too subtle to induce non-identical logics, at least at the $(\gneg,\mlc,\mld,!,?)$-level. For instance, as shown in \cite{Japjsl,Japfour}, the implicative logic induced by all three recurrences $!\in\{\pst,\st,\bst\}$, with implication $E\supset F$ understood as $!E\mli F$, is exactly the implicative fragment of Heyting's intuitionistic calculus. The fact that the seemingly ``almost the same'' operators $\st$ and $\bst$ induce different logics is especially surprising.  

The intended audience for this relatively short (by the standards of CoL) and technical paper is expected to be familiar with the main concepts of CoL, such as those of static games, hard-  and easy-play machines, the operators $\gneg,\mlc,\mld,\pst,\pcost,\st,\cost,\ada,\ade$ (as always, $A\mli B$ is an abbreviation of $\gneg A\mld B$), interpretation, validity, and the related notions. If not, it would be both necessary and sufficient to read the first ten sections of \cite{Japfin}  for a self-contained, tutorial-style introduction. The definition of $\st$ given in \cite{Japfin} is a little bit long and, for that reason, this paper re-introduces this operation, together with its ``countable'' counterpart $\bst$, through a shorter definition. No other operations and concepts will be reintroduced and, again, they are to be understood as defined or explained in \cite{Japfin}.

\section{Recurrence operations: a quick review} Officially, (uncountable) branching recurrence $\st$  was first introduced in \cite{Jap03}, parallel recurrence $\pst$ in \cite{Japic}, and countable branching recurrence $\bst$  in \cite{Japfour}. $\st$ is the author's favorite, as it permits reusing its argument (as a resource) in the strongest algorithmic sense possible, thus allowing us to claim that the compound operation $\st A\mli B$ captures our most general intuition of algorithmically reducing $B$ to $A$.  The weaker $\pst$ stands out as the simplest sort of a recurrence. $\bst$, by its strength strictly between $\pst$ and $\st$, is, in a sense, the strongest of all possible nontrivial weakenings of $\st$. Our interest in $\bst$ is partly also historical. It is related to the apparent fact that $\bst$ is ``equivalent'' to Blass's \cite{Bla92} {\em repetition operator} $R$, the idea of which, in fact, was already present  in \cite{Bla72}, fifteen years before a similar (in the overall logical spirit) idea was materialized in the form of the exponential operator $!$ of linear logic. Here the qualification ``equivalent''  lacks a precise meaning, 
because $\bst$ and $R$  operate in non-identical game-semantical contexts (among the differences is that Blass's games are strict while the CoL games are not), which have never been brought to a common denominator. In a precise yet weaker wording, it is believed that, at least, the logical behaviors of the two operators are indistinguishable. Such a claim was made in \cite{Ver} and, while no proof has been attempted, the present author has hardly any doubts that it is correct.  

As we probably remember, $\pst A$ is defined simply as the infinite $\mlc$-conjunction $A\mlc A\mlc A\mlc\ldots$. 
The operator $\st$ is technically much more involved. 
In semiformal terms, a play of $\st A$  starts as an ordinary play of game $A$. At any time, however, player $\oo$ (the environment) is allowed to make a ``replicative move'', which creates two copies of the current position $\Phi$ of $A$. From that point on, the game turns into two games played in parallel, each continuing 
from position $\Phi$. We use the bits $0$ and $1$ to denote those two threads, which have a common past (position $\Phi$) but possibly diverging futures. Again, at any time, $\oo$ can further branch either thread, creating two copies of the current position in that thread. If thread $0$ was branched, the resulting two threads will be denoted by $00$ and $01$; and if the branched thread was $1$, then the resulting threads will be denoted by $10$ and $11$. And so on: at any time, $\oo$ may split any of the existing threads $w$ into two threads $w0$ and $w1$. Each thread in the eventual run of the game will be thus denoted by a (possibly infinite) bit string. The game is considered won by $\pp$ (the machine) if it wins $A$ in each of the threads; otherwise the winner is $\oo$. 

To each infinite bit string $w$ may thus correspond a separate run of $A$ in thread (represented by) $w$ and, as there are uncountably many infinite bit strings, uncountably many parallel runs of $A$ may be generated when playing $\st A$. Let us call a bit string $w$ {\bf essentially finite} if it contains only a finite number of ``$1$''s; otherwise we say that $w$ is {\bf essentially infinite}. We extend these terms from bit strings to the corresponding threads in the play of $\st A$. The definition of $\st A$ thus requires from $\pp$ to win $A$ in all --- whether they be essentially finite or essentially infinite --- threads. All it takes to turn that definition into a definition of $\bst$ is to relax that requirement and, when determining the winner, only look at essentially finite threads. Since there are only countably many essentially finite bit strings, only countably many runs of $A$ are generated --- more precisely, only countably many runs of $A$ are of relevance --- in $\bst A$. This completes our semiformal definition/explanation of $\st$ and $\bst$.

In fully formal terms, consider a (constant) game $A$. Both $\st A$ and $\bst A$ have the same sets of legal runs.  There are two types of legal moves in (legal) positions  of either game: (1) replicative and (2) non-replicative. To define these, let us agree that by an {\bf actual node}
of a (legal) position $\Phi$ of $\st A$ or $\bst A$ we mean a bit string $w$ such that $w$ is either empty,\footnote{Intuitively, the empty bit string is the name/address of the initial thread; all other threads will be descendants of that thread.} 
or else is $u0$ or $u1$ for some bit string $u$ such that $\Phi$ contains the move $\col{u}$. 
An actual node is said to be a {\bf leaf} iff it is not a proper prefix of any other actual node.\footnote{Intuitively, a leaf 
is the unique individual name of an already existing thread of a play over $A$, while an actual node $w$ which is not a leaf is 
a ``partial'' common name of several already existing threads --- namely, all threads whose individual names look like $wv$ for some $v$.}  A replicative move can only be made by (is only legal for) $\oo$, and such a move in a given position $\Phi$ should be $\col{w}$, where $w$ is a leaf of
$\Phi$.\footnote{The intuitive meaning of move $\col{w}$ is splitting thread $w$ into $w0$ and $w1$, thus ``activating'' these two new nodes/threads.} As for non-replicative moves, they can be made by either player. Such a move by a player $\xx$ in a given position $\Phi$ should be $w.\alpha$, where   $w$ is an actual node of $\Phi$ and $\alpha$ is a move such that, for any infinite bit string $v$, $\alpha$ is a legal move by $\xx$ in position $\Phi^{\preceq wv}$ of $A$.\footnote{The intuitive meaning of such a move $w.\alpha$ is making move $\alpha$ in thread $w$ and all of its (current or future) descendants.} Here, for a run $\Theta$ and a bit string $x$, $\Theta^{\preceq x}$ means the result of deleting from $\Theta$ all moves except those that look like $u.\beta$ for some initial segment $u$ of $x$, and then further deleting the prefix ``$u.$'' from such moves.\footnote{Intuitively, $\Theta^{\preceq x}$ is the run of $A$ that has been played in thread $x$, if such a thread exists (has been  generated); otherwise, $\Theta^{\preceq x}$ is the run of $A$ that has been played in (the unique) existing thread which (whose name, that is) is
some initial segment of $x$.} A legal run $\Gamma$ of $\st A$ is considered won by $\pp$ iff, for every infinite bit string $v$, $\Gamma^{\preceq v}$ is a $\pp$-won run of $A$.  And a legal run $\Gamma$ of $\bst A$ is considered won by $\pp$ iff, for every infinite but essentially finite bit string $v$, $\Gamma^{\preceq v}$ is a $\pp$-won run of $A$. This completes our definition of $\st$ and $\bst$.

What we here semiformally call a {\bf thread} of a (legal) run $\Gamma$ of $\st A$ or $\bst A$ is a generalized leaf. Each thread is named by --- and is usually identified with --- a bit string $w$. When $w$ is finite, saying that it is (it names) a thread means the same as saying that it is a leaf.\footnote{More precisely, if $\Gamma$ is infinite, $w$ should be a leaf of every ``sufficiently long'' finite initial segment $\Phi$ of $\Gamma$.}
 And when $w$ is infinite, saying that it is (it names) a thread means that every finite initial segment  of $w$ is an actual node of some finite initial segment $\Phi$ of $\Gamma$. In other words, a thread is nothing but what the paper \cite{Japfin} more technically refers to as a ``{\em complete branch of the underlying bitstring tree}''. Intuitively, however, and by some abuse of language, when $w$ is a thread, every initial segment $v$ of it we also see as ``the same thread'', because such a $v$ is nothing but a certain ``early stage'' of $w$.

This was a brutally quick review, of course. See \cite{Japfin} for more explanations and illustrations. In our treatment, we shall hardly 
ever rely on the formal definitions of the relevant game operations. Rather, we will be using informal or semiformal explanations and intuitive reasoning. Again, it should be pointed out that the present paper is not meant for a newcomer to the area of CoL.

\section{Parallel recurrence validates both production principles}\label{s3}
We first want to set up a uniform (interpretation-independent) winning strategy for short production with $!$ understood as $\pst$. In fact, we can and will do so for the following, more general than (\ref{n1}), form of short production:  
\begin{equation}\label{j26a}
P\hspace{2pt}\mlc\hspace{2pt} !\hspace{1pt}(P\mli P\mlc Q)\hspace{3pt}\mli \hspace{4pt}!\hspace{1pt}Q.
\end{equation}
Writing $\overline{X}$ for $\gneg X$ and eliminating $\mli$, (\ref{j26a}) is rewritten as follows:

\begin{equation}\label{f1}
\overline{P}\ \mld \ ?\bigl(P\mlc (\overline{P}\mld \overline{Q})\bigr)\  \mld \ !\hspace{1pt}Q.
\end{equation}
For convenience of references, let us agree that $P_0,P_1,P_2,\ldots$ all mean the same as $P$, and that $Q_1,Q_2,\ldots$ all mean $Q$. Now, remembering that $\pst$ is nothing but an infinite $\mlc$-conjunction, and its dual $\pcost$ is nothing but an infinite $\mld$-disjunction (and that disjunction is associative), (\ref{f1}) can be further rewritten as 

\begin{equation}\label{f2}
\overline{P}_0\ \mld \ \bigl(P_1\mlc (\overline{P}_1\mld \overline{Q}_1)\bigr)\ \mld \ \bigl(P_2\mlc (\overline{P}_2\mld \overline{Q}_2)\bigr)\ \mld\ \ldots\ \mld    \ \bigl(Q_1\mlc Q_2\mlc \ldots\bigr).
\end{equation}

Since our strategy does not depend on an interpretation $^*$ applied to (\ref{f2}), we typically omit it (here and later in similar cases) and write, say, $P$ where, strictly speaking, $P^*$ is meant. In other words, with some innocent abuse of concepts, we identify formulas with the games into which they turn after an interpretation is applied to them. 
 
A strategy that solves (\ref{f2}) is rather simple, and is schematically shown in Figure 2. 
\begin{center} 
\begin{picture}(115,193)
\put(29,97){\line(-2,-1){23}}
\put(29,97){\line(0,1){10}}

\put(26,173){\small $\overline{P}_0$}
\put(29,161){\line(-2,-1){23}}
\put(29,161){\line(0,1){10}}

\put(51,129){\line(1,0){56}}
\put(51,129){\line(0,1){10}}
\put(107,129){\line(0,1){10}}
\put(0,141){\small $ P_1\mlc (\overline{P}_1 \mld\overline{Q}_1)$}
\put(102,141){\small $Q_1$}
\put(29,129){\line(-2,-1){23}}
\put(29,129){\line(0,1){10}}

\put(51,97){\line(1,0){56}}
\put(51,97){\line(0,1){10}}
\put(107,97){\line(0,1){10}}
\put(0,109){\small $ P_2\mlc (\overline{P}_2 \mld\overline{Q}_2)$}
\put(102,109){\small $Q_2$}
\put(29,97){\line(-2,-1){23}}
\put(29,97){\line(0,1){10}}

\put(51,65){\line(1,0){56}}
\put(51,65){\line(0,1){10}}
\put(107,65){\line(0,1){10}}
\put(0,77){\small $ P_3\mlc (\overline{P}_3 \mld\overline{Q}_3)$}
\put(102,77){\small $Q_3$}
\put(29,65){\line(-2,-1){10}}
\put(29,65){\line(0,1){10}}

\put(54,47){$\bullet$} 
\put(54,41){$\bullet$} 
\put(54,35){$\bullet$} 

\put(34,10){\bf Figure 2}

\end{picture}
\end{center}

The strategy consists in copycat routines between the pairs of subgames indicated in Figure 2 by arcs; we say that the corresponding two subgames $X$ and $\overline{X}$  are {\bf matched}, or {\bf synchronized} by the strategy.  Synchronizing means mimicking, in $X$, the moves made by the adversary in $\overline{X}$,  and vice versa. If we see our strategy as an EPM (``Easy-Play Machine''), the synchronization is {\em perfect}, in the  sense that the run taking place in $\overline{X}$ is the exact {\em negation}  (all labels reversed) of the run taking place in $X$,  meaning that exactly one of the two subgames will be eventually won by $\pp$. If we see our strategy as an HPM (``Hard-Play Machine''), the synchronization is not necessarily perfect.\footnote{After all, the environment can make any finite number of moves at once while the machine/strategy can make at most one move per computation step.} What is however still guaranteed is that the negation of the run taking place in $\overline{X}$ is a $\oo$-delay (See \cite{Japfin}, Section 5) of the  run taking place in $X$; taking into account that (as always in CoL) the games that we consider are static, this means that at least one of the subgames $X$, $\overline{X}$ will be won by $\pp$, which, in view of the monotonicity of the winning conditions for the relevant game operations, is ``even better than'' when exactly one subgame is won. In view of this observation, for simplicity, here and later we will pretend that a synchronization is always perfect. We may not always be specific about whether the strategy that we consider is an HPM or an EPM, as these two models are equivalent for static games (See \cite{Japfin}, Section 6). This equivalence, in turn, allows us to pretend (usually only implicitly) that the adversary of a given strategy always waits patiently until the strategy permits it to move.

Anyway, a simple combinatorial analysis of the situation convinces us that the strategy represented in Figure 2 wins the game: in view of its matching arrangements, one can see that if one of $Q_i$ ($i\geq 1$) is lost by $\pp$, then either $\overline{P}_0$ or one of $P_j\mlc(\overline{P}_j\mld \overline{Q}_j)$ ($1\leq j\leq i$) is won, and hence so is the overall game.

As for long production, again, we can claim its being validated by $\pst$ in forms more general than the originally given (\ref{n2}). One of such forms is  
\begin{equation}\label{july1a}
P\hspace{2pt}\mlc\hspace{2pt} !\hspace{1pt}(P\mli P\mlc Q)\hspace{2pt}\mlc\hspace{2pt} !\hspace{1pt}(R\mld Q\mli R)\hspace{3pt}\mli \hspace{4pt}!\hspace{1pt}R.
\end{equation}
As shown in \cite{Japfin}, with $!=\pst$, everything provable in affine logic is uniformly valid; and uniform validity is closed under modus ponens. The formula $(\ref{j26a})\mli (\ref{july1a})$ can easily be seen to be provable in affine logic. And, as we already know, its antecedent is uniformly valid. Hence so is its consequent.
  
\section{Branching recurrences do not validate short production}\label{s4}

We are going to show that the following instance of short production, with either $!\in\{\st,\bst\}$, is not valid, where $p$ is a binary {\em elementary  letter} (see Section 7 of \cite{Japfin}): 
\[\ade x\ada y\hspace{1pt}p(x,y)\ \mlc \ !\hspace{1pt}\Bigl(\ade x\ada y\hspace{1pt}p(x,y)\mli \bigl(\ade x\ada y\hspace{1pt}p(x,y)\mlc \ade x\ada y\hspace{1pt}p(x,y)\bigr)\Bigr)\  \mli \ \hspace{2pt} !\hspace{1pt}\ade x\ada y\hspace{1pt}p(x,y).
\]

Using $\overline{p}(x,y)$ for $\gneg p(x,y)$, the above formula is rewritten  
 as follows:
\begin{equation}\label{f4}
\ada x\ade y\hspace{1pt}\overline{p}(x,y)\ \mld \ ?\Bigl(\ade x\ada y\hspace{1pt}p(x,y)\mlc \bigl(\ada x\ade y\hspace{1pt}\overline{p}(x,y)\mld \ada x\ade y\hspace{1pt}\overline{p}(x,y)\bigr)\Bigr)\  \mld \ !\hspace{1pt}\ade x\ada y\hspace{1pt}p(x,y).
\end{equation}

Let us fix an HPM $\cal H$ as an arbitrary strategy of the machine ($\pp$). We want to construct a {\em counterstrategy} $\cal C$ such that, when the environment ($\oo$) follows it, $\cal H$ loses (\ref{f4}) under an appropriately selected interpretation, meaning that (\ref{f4}) is not uniformly valid (because $\cal H$ was picked arbitrarily). Technically, $\cal C$ is a function that takes  a computation step $i$ of $\cal H$ as an input, and returns a (possibly empty) sequence of moves that the environment should make during step $i$. 

By a {\bf literal} in this section we mean the formula $p(a,b)$ or $\overline{p}(a,b)$ for whatever constants $a,b$. The first literal is said to be {\bf positive}, and the second literal 
is said to be {\bf negative}. When we say that a given positive literal $p(a,b)$ occurs in a given formula, we always mean that it occurs without negation. For instance, the formula $p(a,b)\mlc \overline{p}(c,d)$ contains the literal $p(a,b)$ --- as well as $\overline{p}(c,d)$ --- but not $p(c,d)$. So, strictly speaking, a literal for us is a formula $p(a,b)$ or $\overline{p}(a,b)$ together with some fixed (usually clear from the context) positive occurrence. Unlike a literal, an {\bf atom} always simply means the formula $p(a,b)$ (for some constants $a,b$), no matter where and how it occurs. 
So, for instance, the above formula  $p(a,b)\mlc \overline{p}(c,d)$ {\em does} contain the atom (but not the literal)  $p(c,d)$.
Any two literals $p(a,b)$ and $\overline{p}(a,b)$ (the same $a,b$) are said to be {\bf opposite}. 

Below is a description of the work of $\cal C$. In it, as in the previous section, terminologically we treat formulas as if they were games. The description assumes the context of a particular step of the play/interaction between $\cal H$ and $\cal C$. In that context, a  {\bf fresh} constant means 
a constant (decimal numeral) that has never been chosen by either player for the variables of (\ref{f4}) so far. A {\bf component}, {\bf part} or {\bf subgame}, unless otherwise specified, always means one of (\ref{f4}). We refer to the three disjuncts of (\ref{f4}) as the {\bf recurrence-free component}, the {\bf $?$-component} and the {\bf $!$-component}, respectively. By an {\bf activated} literal we mean one to which the game has already been brought down in the recurrence-free component, or in one of the threads of the $!$-component, or in one of the parts of one of the threads of the $?$-component. For instance, if, by a given time, in one of the threads of the $?$-component, the game $\ade x\ada y\hspace{1pt}p(x,y)\mlc \bigl(\ada x\ade y\hspace{1pt}\overline{p}(x,y)\mld \ada x\ade y\hspace{1pt}\overline{p}(x,y)\bigr)$ has been brought down to $\ada y\hspace{1pt}p(a,y)\mlc \bigl(\overline{p}(b_1,c)\mld \ade y\hspace{1pt}\overline{p}(b_2,y)\bigr)$, then $\overline{p}(b_1,c)$ --- but not necessarily $p(b_1,c)$ --- is an activated literal at that (and any later) time. 
 We implicitly assume that $\cal H$ never makes illegal moves, or otherwise $\cal C$ wins immediately. \vspace{4pt}

We define the counterstrategy $\cal C$ --- or rather describe it in semiformal terms --- through the following four ``prescriptions'':
\begin{description}
  \item[Prescription (i):] At the very beginning of the play, choose the constant $1$ for $x$ in the recurrence-free component, thus bringing it down to $\ade y\hspace{1pt}\overline{p}(1,y)$.
  \item[Prescription (ii):]  Whenever, in any given thread $w$ of the $?$-component, $\cal H$ chooses a constant $a$ for $x$ in the $\ade x\ada y\hspace{1pt}p(x,y)$ part, 
choose fresh constants $b_1,b_2$ ($b_1\not= b_2$) for the two occurrences of $x$ in the $\ada x\ade y\hspace{1pt}\overline{p}(x,y)\mld \ada x\ade y\hspace{1pt}\overline{p}(x,y)$ part of the same thread. Thus, as a result, in thread $w$ we will now have\footnote{I.e., the (sub)game $\add x\adc y\hspace{1pt}p(x,y)\wedge \bigl(\adc x\add y\hspace{1pt}\overline{p}(x,y)\vee \adc x\add y\hspace{1pt}\overline{p}(x,y)\bigr)$ will be brought down to \ldots}
\[\ada y\hspace{1pt}p(a,y)\mlc \bigl(\ade y\hspace{1pt}\overline{p}(b_1,y)\mld \ade y\hspace{1pt}\overline{p}(b_2,y)\bigr).\]
  \item[Prescription (iii):] If and when $\cal H$ chooses a constant $\mathfrak{m}$ for $x$ in the (so far) single thread of the $!$-component,
 split that thread, and choose fresh constants $\mathfrak{n}_1,\mathfrak{n}_2$ ($\mathfrak{n}_1\not= \mathfrak{n}_2$) for $y$ in the two newly emerged threads. Thus, as a result, we will now have two threads in the $!$-component, one containing $p(\mathfrak{m},\mathfrak{n}_1)$ and the other containing $p(\mathfrak{m},\mathfrak{n}_2)$. From this point on, no moves in the $!$-component can or will ever be made again by either player. 
  \item[Prescription (iv):] Suppose, in a given thread $w$ of the $?$-component, by now the game has been brought down to one of the following forms:
\begin{equation}\label{u1}
\ada y\hspace{1pt}p(a,y)\mlc \bigl(\ade y\hspace{1pt}\overline{p}(b,y)\mld F\bigr);
\end{equation}
\begin{equation}\label{u2}
\ada y\hspace{1pt}p(a,y)\mlc \bigl(F\mld \ade y\hspace{1pt}\overline{p}(b,y)\bigr)
\end{equation}
(where $F$ is either $\ade y\hspace{1pt}\overline{p}(b',y)$ or $\overline{p}(b',c')$ for some $b',c'$), and $\cal H$ chooses a constant $c$ for $y$ 
in the $\ade y\hspace{1pt}\overline{p}(b,y)$ subcomponent, such that the literal $p(b,c)$ is (already) activated. Then choose a fresh constant $d$  for $y$ in the $\ada y\hspace{1pt}{p}(a,y)$ part of the same thread. Thus, depending on which of (\ref{u1}), (\ref{u2}) was the case, in thread $w$ we will now have one of the following: 
\[p(a,d)\mlc \bigl(\overline{p}(b,c)\mld F\bigr);\]  
 \[p(a,d)\mlc \bigl(F\mld \overline{p}(b,c)\bigr).\vspace{7pt}\]  
\end{description}

Consider the play of $\cal H$ in the scenario where the environment acts according to counterstrategy $\cal C$.  There are two possibilities to be looked at separately.

One possibility is that it never comes to acting according to Prescription (iii).  This means that the $!$ component remains unchanged throughout the play. Observe that then it never comes to acting according to Prescription (iv) either, because Prescription (iii) is the only place where 
the {\em first-ever} positive activated literal can emerge --- a literal whose presence is required in Prescription (iv). In this case, we choose  an interpretation that makes every atom $p(a,b)$ true. A rather straightforward analysis of the situation convinces us that $\cal H$ loses (\ref{f4}) under this interpretation.

The other possibility, on which we focus throughout the rest of this section, is that, at some point, it comes to $\cal C$ acting according to Prescription (iii).  
Let us fix  $\mathfrak{m},\mathfrak{n}_1,\mathfrak{n}_2$ as the (unique) constants from Prescription (iii). 

Let us say that a literal $p(a,d)$ is a {\bf threadmate} of a literal $\overline{p}(b,c)$ iff there is a formula/game 
$p(a,d)\mlc\bigl(\overline{p}(b,c)\mld F\bigr)$ or $p(a,d)\mlc\bigl(F\mld \overline{p}(b,c)\bigr)$\footnote{In either case, $F$ is either 
$\ade y\hspace{1pt}\overline{p}(b',y)$ or $\overline{p}(b',c')$ for some $b',c'$.} to which the game has been brought down\footnote{In this context meaning ``has been brought down at some point in the play''. The same applies to our usage of ``activated'' and similar terms.} in one of the threads of the ?-component.

We define a {\bf chain} as a nonempty finite sequence of activated literals of the following form:   
\begin{equation}\label{chain}
p(a_1,b_1),\ \overline{p}(a_1,b_1),\ldots, \ p(a_{n-1},b_{n-1}),\ \overline{p}(a_{n-1},b_{n-1}), \ p(a_n,b_n),\ \overline{p}(a_n,b_n)
\end{equation} 
($n\geq 1$), where (in addition to what can be seen from the above display --- namely, that every literal at an odd position is positive and is followed by the opposite literal), for each $i$ with $1\leq i<n$, the literal
$p(a_{i+1},b_{i+1})$ is a threadmate of the literal $\overline{p}(a_i,b_i)$.

What we call a {\bf semichain} 
\begin{equation}\label{semichain}
p(a_1,b_1),\ \overline{p}(a_1,b_1),\ \ldots, \ p(a_{n-1},b_{n-1}),\ \overline{p}(a_{n-1},b_{n-1}), \ p(a_n,b_n)
\end{equation}
satisfies exactly the same conditions as a chain, with the only difference that the last element $\overline{p}(a_n,b_n)$ is not present. 
Thus, chains are even-length while semichains are odd-length. Of course, if (\ref{chain}) is a chain, then (\ref{semichain}) is a semichain. But not necessarily vice versa: if the literal $\overline{p}(a_n,b_n)$ is not activated, then (\ref{chain}) is not a chain even if (\ref{semichain}) is a semichain.

We say that a constant $a$ is {\bf activated} iff it has been chosen for $x$ or $y$ at some point by either player in any (thread of any) part of the play.   Using consecutive positive integers $1,2,\ldots$ for the computation steps of $\cal H$, by the {\bf activation time} $\atime(a)$ of such a constant we mean the earliest  computation step of $\cal H$ during which $a$ was first chosen by the corresponding player.  Remember that, in the HPM model, the machine can make at most one move during a given computation step, while the environment can make any finite number of moves. We assume that the two moves that $\cal C$ makes according to Prescription (ii) --- choosing $b_1$ and choosing $b_2$ --- happen during the same step, so that $\atime(b_1)=\atime(b_2)$.

Where $i\in\{1,2\}$, a  {\bf $p(\mathfrak{m},\mathfrak{n}_i)$-headed chain} is a chain whose first literal is $p(\mathfrak{m},\mathfrak{n}_i)$. When we do not want to be specific about whether $i=1$ or $i=2$, we simply say ``{\bf a headed chain}''.  The same terminology extends from chains to semichains. 

\begin{lemma}
\label{j27a}
Suppose the following are headed semichains: 
\[p(a_1,b_1),\ \overline{p}(a_1,b_1),\ \ldots,\ p(a_{n-1},b_{n-1}),\ \overline{p}(a_{n-1},b_{n-1}), \ p(a_n,b_n);\]
\[p(a'_1,b'_1),\ \overline{p}(a'_1,b'_1),\ \ldots,\ p(a'_{n-1},b'_{n-1}),\ \overline{p}(a'_{n-1},b'_{n-1}), \ p(a'_n,b'_n).\]
 Then  $a_1=a'_1,\ldots,a_n=a'_n$ and $\atime(a_1)>\ldots >\atime(a_n)$. 
\end{lemma}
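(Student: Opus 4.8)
The plan is to prove the lemma by induction on $n$, showing that a headed semichain is forced to carry a rigid ``skeleton'' determined by $n$ alone. The base case $n=1$ is immediate: a headed semichain of length $1$ is a single literal which, being headed, is $p(\mathfrak m,\mathfrak n_1)$ or $p(\mathfrak m,\mathfrak n_2)$, so $a_1=a_1'=\mathfrak m$, and the $\atime$ inequality is vacuous.

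The heart of the argument is a structural fact read off the four prescriptions backwards. Suppose $p(c,d)$ is a threadmate of $\overline p(a,b)$, i.e.\ in some thread $w$ of the $?$-component the game has been brought down to $p(c,d)\mlc(\overline p(a,b)\mld F)$ or $p(c,d)\mlc(F\mld\overline p(a,b))$. Tracing how this position could have arisen: the $\overline p(a,b)$-occurrence descends from a $\ade y\,\overline p(a,y)$, which in turn descends from a $\ada x\ade y\,\overline p(x,y)$ disjunct that $\cal C$ resolved by choosing $a$ for $x$ --- and $\cal C$ picks constants for those $\ada x$'s only through Prescription (ii), where the chosen constants are \emph{fresh}; moreover the $p(c,d)$-occurrence descends from the $\ade x\ada y\,p(x,y)$ conjunct of the \emph{same} thread, which $\cal H$ resolved by choosing $c$ for $x$, and that very $\cal H$-move triggered the instance of Prescription (ii) just mentioned. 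Since $\cal C$ never reuses a fresh constant, $a$ determines the thread $w$ uniquely, and then $c$ is the unique constant $\cal H$ ever chose for $x$ in $w$. Applied along a headed semichain (where, for $1\le i<n$, $p(a_{i+1},b_{i+1})$ is a threadmate of $\overline p(a_i,b_i)$), this says: $a_1,\dots,a_{n-1}$ are fresh constants picked by $\cal C$, each $a_i$ pinning down a unique thread $w_i$, and $a_{i+1}$ is forced to be $\cal H$'s $x$-choice in $w_i$. As $a_1=\mathfrak m$ is fixed by headedness, the whole sequence $a_1,\dots,a_n$ is determined, giving $a_i=a_i'$ for all $i$; the same reasoning shows in passing that the threads $w_1,\dots,w_{n-1}$ and the constants $a_1,\dots,a_{n-1}$ are pairwise distinct.

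For the time ordering I would argue as follows. Since $a_i$ ($1\le i\le n-1$) is fresh when $\cal C$ picks it, $\atime(a_i)$ is precisely the step of that Prescription-(ii) move, and that move is $\cal C$'s answer to $\cal H$'s choosing $a_{i+1}$ for $x$ in $w_i$; hence $\cal H$ had already chosen $a_{i+1}$ by then, so $\atime(a_{i+1})\le\atime(a_i)$. To make this strict: for $1\le i\le n-2$ the constant $a_{i+1}$ is itself a fresh $\cal C$-choice (from the link $i{+}1\to i{+}2$, in the \emph{different} thread $w_{i+1}$), and it cannot have been freshly introduced by $\cal C$ at the very step at which $\cal H$ chose it, so $\atime(a_{i+1})$ is strictly below the step at which $\cal H$ chose $a_{i+1}$, which is $\le\atime(a_i)$; for $i=n-1$ one has to show that $a_n$ is already on the board strictly before $\cal C$ freshly introduces $a_{n-1}$. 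Chaining these inequalities yields $\atime(a_1)>\dots>\atime(a_n)$.

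I expect the strictness --- and above all the last inequality $\atime(a_{n-1})>\atime(a_n)$ --- to be the main obstacle: one must reason carefully, using the scheduling of $\cal C$'s prescribed answers relative to the $\cal H$-move that triggers them and the fact that $\cal H$ makes at most one move per computation step, to rule out the possibility that $a_n$ first appears at exactly the step at which $\cal C$ introduces the fresh constant $a_{n-1}$. A secondary, more routine, obstacle is making the ``forced structure'' claims of the second paragraph fully airtight: verifying that the quoted positions can only have been produced in the single way described, with no overlooked alternative history and no unwanted collisions among the threads $w_i$.
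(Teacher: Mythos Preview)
Your proposal is correct and follows essentially the same approach as the paper: induction on $n$, with the key observation that the constant $a_{n-1}$ (more generally each $a_i$ with $i<n$) was introduced \emph{fresh} by $\cal C$ via Prescription~(ii) in response to $\cal H$'s choice of $a_n$ (resp.\ $a_{i+1}$) in a single thread, so that $a_{n-1}=a'_{n-1}$ forces the threads to coincide at that moment and hence $a_n=a'_n$. The paper runs the determinism argument backward (using the induction hypothesis $a_{n-1}=a'_{n-1}$ to force $a_n=a'_n$) while you phrase it forward, but the content is the same; as for the strictness of $\atime(a_{n-1})>\atime(a_n)$ that you flag as the main obstacle, the paper simply asserts it from freshness without the careful scheduling analysis you anticipate.
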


\begin{proof} Induction on $n$. For $n=1$, the statement of the lemma  is immediate, because $a_1=a'_1=\mathfrak{m}$. Now, suppose $n>1$. By the definition of a semichain, there is a thread $w$ in the $?$-component where, for some $F$, at some point, the game was brought down to 
\label{j29a}
\begin{equation}\label{j29a}
p(a_{n},b_n)\mlc\bigl(\overline{p}(a_{n-1},b_{n-1})\mld F\bigr) 
\end{equation}
 (or to $p(a_{n},b_n)\mlc\bigl(F\mld \overline{p}(a_{n-1},b_{n-1})\bigr) $, but this case is similar); there is also a thread $w'$ where, for some $F'$, at some point,
 the game was brought down to 
\label{j29b}
\begin{equation}\label{j29b}
p(a'_{n},b'_n)\mlc\bigl(\overline{p}(a'_{n-1},b'_{n-1})\mld F'\bigr) 
\end{equation}
 (or to $p(a'_{n},b'_n)\mlc\bigl(F'\mld \overline{p}(a'_{n-1},b'_{n-1})\bigr) $, but this case is similar). Let us try to trace the history of how (\ref{j29a}) and (\ref{j29b}) emerged in the corresponding threads. At the beginning, both threads $w$ and $w'$ --- just like all threads of the $?$-component --- had the formula/game 
$\ade x\ada y\hspace{1pt}p(x,y)\mlc\bigl(\ada x\ade y\hspace{1pt}\overline{p}(x,y)\mld\ada x\ade y\hspace{1pt}\overline{p}(x,y) \bigr) $. Some time later, the game in $w$  became  
\label{j29c}
\begin{equation}\label{j29c}
\ada y\hspace{1pt}p(a_n,y)\mlc\bigl(\ada x\ade y\hspace{1pt}\overline{p}(x,y)\mld\ada x\ade y\hspace{1pt}\overline{p}(x,y) \bigr), 
\end{equation}
 and the game in $w'$ became 
\label{j29d}
\begin{equation}\label{j29d}
\ada y\hspace{1pt}p(a'_n,y)\mlc\bigl(\ada x\ade y\hspace{1pt}\overline{p}(x,y)\mld\ada x\ade y\hspace{1pt}\overline{p}(x,y) \bigr).
\end{equation}
In view of Prescription (ii), the former event was followed by $\cal C$ turning (\ref{j29c}) into 
\label{j29e}
\begin{equation}\label{j29e}
\ada y\hspace{1pt}p(a_n,y)\mlc\bigl(\ade y\hspace{1pt}\overline{p}(a_{n-1},y)\mld\ade y\hspace{1pt}\overline{p}(k,y) \bigr)
\end{equation}
for some $k$, and the latter event 
was followed by $\cal C$ turning (\ref{j29d}) into 
\label{j29f}
\begin{equation}\label{j29f}
\ada y\hspace{1pt}p(a'_n,y)\mlc\bigl(\ade y\hspace{1pt}\overline{p}(a'_{n-1},y)\mld\ade y\hspace{1pt}\overline{p}(k',y) \bigr)
\end{equation}
for some $k'$. Note that  $a_{n-1}$  (as well as $a'_{n-1}$, $k$ and $k'$) had to be fresh, for otherwise $\cal C$ would not have chosen it. So, $\atime(a_{n-1})>\atime(a_n)$. We further claim that, by the time (\ref{j29e}) and (\ref{j29f}) emerged, the threads $w$ and $w'$ had not yet diverged; that is, (\ref{j29e}) and (\ref{j29f}) are the same and hence $a_n=a'_n$. Indeed, deny this. Note that then, as constants chosen by $\cal C$ in different threads, $a_{n-1}$ and $a'_{n-1}$ would be different from each other, because $\cal C$ always selects fresh constants. But, by the induction hypothesis, $a_{n-1}=a'_{n-1}$, which is a contradiction. 
\end{proof}

Let us say that a positive literal $p(a,b)$ is {\bf reachable} iff it appears in (or, equivalently, is the last element of) some headed semichain. 

By the {\bf activation time} of an activated positive literal $p(a,d)$ we mean the time at which the constant $d$ was (first) chosen for $y$ by $\cal C$. 
In other words, this is  the time (computation step of $\cal H$) at which the literal  $p(a,d)$ first emerged in the overall play. 

\begin{lemma}\label{j29v}
Every positive activated literal is reachable. 
\end{lemma}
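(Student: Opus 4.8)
The plan is to argue by induction on the activation time of the positive activated literal. Before the induction, I would isolate the structural fact that drives everything: a positive literal can become activated in exactly one of two ways. Either it is produced by Prescription~(iii), in which case it is $p(\mathfrak{m},\mathfrak{n}_1)$ or $p(\mathfrak{m},\mathfrak{n}_2)$; or it is produced by Prescription~(iv), in which case $\cal C$ chose the fresh constant serving as its second argument for $y$ in a $\ada y\, p(a,y)$ part of some thread $w$ of the $?$-component. Justifying this is a component-by-component check: after Prescription~(i) the recurrence-free component can only ever be brought down to the negative literal $\overline{p}(1,c)$; by Prescription~(iii) the $!$-component produces nothing besides $p(\mathfrak{m},\mathfrak{n}_1)$ and $p(\mathfrak{m},\mathfrak{n}_2)$; and in each thread of the $?$-component the positive atom lives inside the block $\ade x\ada y\, p(x,y)$, whose $\ade x$ is $\cal H$'s move and whose $\ada y$ is instantiated by $\cal C$ only in Prescription~(iv) --- none of Prescriptions~(i)--(iii) ever touches a $\ada y\, p(\cdot,y)$ subgame. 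I would also recall (as already observed in the text) that $p(\mathfrak{m},\mathfrak{n}_1)$ and $p(\mathfrak{m},\mathfrak{n}_2)$ are the first positive literals ever to appear, so every other positive activated literal has strictly larger activation time and hence must have arisen via Prescription~(iv).

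For the base case, a positive activated literal of minimal activation time must be $p(\mathfrak{m},\mathfrak{n}_1)$ or $p(\mathfrak{m},\mathfrak{n}_2)$, and the one-term sequence consisting of it alone is a headed semichain (the $n=1$ instance of the definition) having this literal as its last element; so it is reachable. For the inductive step, let $p(a,d)$ be a positive activated literal of non-minimal activation time. By the observation above it was produced by an application of Prescription~(iv) in some thread $w$: there, $\cal C$ chose the fresh $d$ for $y$ in a $\ada y\, p(a,y)$ part in response to $\cal H$'s choosing a constant $c$ for $y$ in a $\ade y\, \overline{p}(b,y)$ part at a moment when $p(b,c)$ was already activated, bringing the game in $w$ to the form $p(a,d)\mlc(\overline{p}(b,c)\mld F)$ (or its mirror image). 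From this I read off three facts: $p(a,d)$ is a threadmate of $\overline{p}(b,c)$; $p(b,c)$ is a positive activated literal; and $p(b,c)$ was activated strictly before $p(a,d)$, since its being \emph{already} activated is precisely the enabling condition of this very application of Prescription~(iv). By the induction hypothesis $p(b,c)$ is reachable, i.e.\ it is the last element of some headed semichain $\sigma$. Appending $\overline{p}(b,c)$ and then $p(a,d)$ to $\sigma$ yields --- using that $p(a,d)$ is a threadmate of $\overline{p}(b,c)$ and that $\overline{p}(b,c)$ is itself activated (it is a part of thread $w$'s game) --- a headed semichain whose last element is $p(a,d)$. Hence $p(a,d)$ is reachable, completing the induction.

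I expect the main obstacle --- really the only point needing genuine care --- to be the ``exactly two origins'' claim: one has to be scrupulous about which player owns each choice quantifier in each of the three components, and to confirm that $\cal C$, following the prescriptions, never instantiates the $y$ of a $p$-block except in Prescription~(iv). A secondary point to watch is the strictness of the decrease of activation times along the threadmate link, on which the induction relies; this is forced by the enabling condition of Prescription~(iv), since the witnessing positive literal must be activated before the computation step at which $\cal C$'s response activates the new literal. Once these are pinned down, the remainder is the routine inductive unwinding of the threadmate relation back to a Prescription-(iii) literal, exactly as above.
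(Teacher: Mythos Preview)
Your proposal is correct and follows essentially the same approach as the paper: induction on activation time, with the base case being the two literals $p(\mathfrak{m},\mathfrak{n}_1),p(\mathfrak{m},\mathfrak{n}_2)$ from Prescription~(iii), and the inductive step extending a headed semichain for the earlier-activated $p(b,c)$ by $\overline{p}(b,c),p(a,d)$ via the threadmate relation created by Prescription~(iv). Your treatment is in fact somewhat more careful than the paper's, which handles the ``two origins'' claim with a single ``obviously''; your explicit component-by-component check and your remark that $\overline{p}(b,c)$ is itself activated (needed for the extended sequence to satisfy the definition) are welcome additions rather than deviations.
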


\begin{proof} Consider an arbitrary positive activated literal $p(a,d)$. We proceed by induction on the activation time of $p(a,d)$. Among all positive activated literals,   $p(\mathfrak{m},\mathfrak{n}_1)$ and $p(\mathfrak{m},\mathfrak{n}_2)$ obviously have the smallest activation time; and 
they are reachable.   
 
Now consider any other activated positive literal $p(a,d)$. Obviously it could have emerged in the play only according to Prescription (iv).  Namely, at some point, in some thread of the $?$-component, we had $\ada y\hspace{1pt}p(a,y)\mlc\bigl(\ade y\hspace{1pt}\overline{p}(b,y)\mld F\bigr)$ (or $\ada y\hspace{1pt}p(a,y)\mlc\bigl(F\mld\ade y\hspace{1pt}\overline{p}(b,y)\bigr)$, but this case is similar), and the event that triggered the application of Prescription (iv) was that the above became  $\ada y\hspace{1pt}p(a,y)\mlc\bigl(\overline{p}(b,c)\mld F\bigr)$,  where the (positive) literal $p(b,c)$ had already been activated. To this event, $\cal C$ responded by further bringing the game in the thread down to $p(a,d)\mlc\bigl(\overline{p}(b,c)\mld F\bigr)$ for a fresh $d$, thus ``activating'' $p(a,d)$. The activation time of  $p(b,c)$ is thus smaller than that of $p(a,d)$. Hence,
by the induction hypothesis,  $p(b,c)$ is reachable. But $p(a,d)$ is a threadmate of $\overline{p}(b,c)$. Hence $p(a,d)$ is also reachable: a semichain ending in   $p(a,d)$ is obtained from the semichain ending in $p(b,c)$ by appending to it the two literals $\overline{p}(b,c)$ and $p(a,d)$.
\end{proof}

We say that a constant $a$ is {\bf reachable} iff, for some $b$, the positive literal $p(a,b)$ is reachable. 

\begin{lemma}\label{j29w}
Suppose $b$  and $c$ are two distinct reachable constants. Then $\atime(b)\not=\atime(c)$. 
\end{lemma}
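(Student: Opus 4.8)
The plan is to reduce the statement to Lemma~\ref{j27a}, which already contains the substantive content: it says that the sequence of first coordinates of a headed semichain, together with the strictly decreasing sequence of activation times of those first coordinates, is completely determined by the \emph{length} of the semichain (i.e.\ by the number of positive literals in it).

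First I would unwind the definitions. Since $b$ is reachable, some positive literal $p(b,d)$ is reachable, hence occurs in some headed semichain. A positive literal sits at an odd position of a semichain, so truncating that semichain at this occurrence yields a headed semichain $S$ whose last positive literal is $p(b,d)$; say $S$ has $m$ positive literals in all, so $b$ is the first coordinate of the $m$-th positive literal of $S$. (Here one checks the routine bookkeeping point that such a truncation is again a headed semichain: the head $p(\mathfrak{m},\mathfrak{n}_i)$ at position $1$ is untouched, and the threadmate and ``activated'' conditions are inherited.) In exactly the same way, reachability of $c$ yields a headed semichain $S'$ with $n$ positive literals whose $n$-th first coordinate is $c$. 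Assume without loss of generality that $m\le n$.

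Now I would split into two cases. If $m=n$, then Lemma~\ref{j27a} applied to $S$ and $S'$ gives that their $i$-th first coordinates coincide for every $i\le m$; in particular the $m$-th ones coincide, i.e.\ $b=c$, contradicting the hypothesis $b\ne c$. Hence $m<n$. Truncating $S'$ after its $m$-th positive literal produces a headed semichain $S''$ with exactly $m$ positive literals, whose $i$-th first coordinate equals that of $S'$ for all $i\le m$. Lemma~\ref{j27a} applied to $S$ and $S''$ then yields that the $m$-th first coordinate of $S'$ is $b$. Finally, Lemma~\ref{j27a} applied to $S'$ taken as both semichains gives $\atime$ of the first coordinates of $S'$ strictly decreasing; since $m<n$, the $m$-th of these strictly exceeds the $n$-th, that is, $\atime(b)>\atime(c)$. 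In particular $\atime(b)\ne\atime(c)$, as required.

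I do not expect any genuine obstacle here: all the real work is already done in Lemma~\ref{j27a}, and what remains is the definitional reduction from ``$b$ is a reachable constant'' to ``$b$ is the last first coordinate of some headed semichain of a definite length $m$'', plus the short case analysis above. The only thing meriting attention is that small reduction and keeping the indices straight; one could also remark, purely as a sanity check, that since activation times along a semichain strictly decrease, the length $m$ attached to $b$ is in fact uniquely determined by $b$, although the argument above does not rely on this.
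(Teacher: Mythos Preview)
Your proof is correct and rests on the same key ingredient as the paper's, namely Lemma~\ref{j27a}. The paper packages the argument slightly differently---it first observes (via a longest headed semichain) that all reachable constants lie in a single finite sequence $a_1,\ldots,a_n$ with $\atime(a_1)>\cdots>\atime(a_n)$, and then places $b$ and $c$ at distinct positions of that sequence---but the substance is identical to your direct comparison of the two semichains.
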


\begin{proof} Assume $b\not= c$ are reachable constants. In view of Lemma \ref{j27a}, with some thought, one can see that there is a finite sequence 
$a_1,\ldots,a_n$ of constants such that $a_1=\mathfrak{m}$, $\atime(a_1)>\ldots >\atime(a_n)$ and any reachable constant is among $a_1,\ldots,a_n$ (hint: consider a longest headed semichain). So, we have $b=a_i$ and $c=a_j$ for some $1\leq i\not= j\leq n$. And therefore $\atime(b)<\atime(c)$ or $\atime(c)>\atime(b)$, i.e., $\atime(b)\not=\atime(c)$.
\end{proof}

\begin{lemma}\label{j29x}
Suppose 
\[p(a_1,b_1),\ \overline{p}(a_1,b_1),\ \ldots,\ p(a_n,b_n),\ \overline{p}(a_n,b_n)\]
is a $p(\mathfrak{m},\mathfrak{n}_1)$-headed chain, and
\[p(a_1,b'_1),\ \overline{p}(a_1,b'_1),\ \ldots,\ p(a_n,b'_n),\ \overline{p}(a_n,b'_n)\]
is a $p(\mathfrak{m},\mathfrak{n}_2)$-headed chain.  Then $b_1\not=b'_1,\ldots,b_n\not=b'_n$. 
\end{lemma}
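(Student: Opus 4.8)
I would prove the statement by induction on $i$, establishing $b_i\neq b'_i$ for every $i=1,\ldots,n$ at once. The equalities $a_1=a'_1,\ldots,a_n=a'_n$ already built into the formulation come for free from Lemma \ref{j27a}, applied to the two headed semichains obtained by deleting the final literals $\overline{p}(a_n,b_n)$ and $\overline{p}(a_n,b'_n)$, so I would not dwell on that. The base case $i=1$ is immediate: the $p(\mathfrak{m},\mathfrak{n}_1)$-headedness of the first chain forces $b_1=\mathfrak{n}_1$, and likewise $b'_1=\mathfrak{n}_2$, and Prescription (iii) chooses $\mathfrak{n}_1$ and $\mathfrak{n}_2$ to be distinct.

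For the inductive step, assume $b_i\neq b'_i$ with $i<n$. By the definition of a chain, $p(a_{i+1},b_{i+1})$ is a threadmate of $\overline{p}(a_i,b_i)$ and $p(a_{i+1},b'_{i+1})$ is a threadmate of $\overline{p}(a_i,b'_i)$; so in some threads $w$ and $w'$ of the $?$-component the game was at some points brought down to $p(a_{i+1},b_{i+1})\mlc\bigl(\overline{p}(a_i,b_i)\mld F\bigr)$ and $p(a_{i+1},b'_{i+1})\mlc\bigl(\overline{p}(a_i,b'_i)\mld F'\bigr)$ respectively (the symmetric ``$F\mld\overline{p}$'' cases being identical). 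Exactly as in the proof of Lemma \ref{j29v}, I would observe that such a position, having a bare positive literal $p(a_{i+1},\cdot)$ as its left conjunct, can only have arisen through Prescription (iv): a single computation step of $\cal H$ at which $\cal H$ chose the constant $b_i$ (resp.\ $b'_i$) for $y$ in an $\ade y\hspace{1pt}\overline{p}(a_i,y)$ subcomponent, followed by $\cal C$ answering, during that same step, by choosing the fresh constant $b_{i+1}$ (resp.\ $b'_{i+1}$) for $y$ in the accompanying $\ada y\hspace{1pt}p(a_{i+1},y)$ part. The triggering $\cal H$-move thus picks the value $b_i$ in the first case and $b'_i$ in the second; since $b_i\neq b'_i$ by the induction hypothesis, these are two different moves, and as $\cal H$ makes at most one move per computation step, they fall on two different steps. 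Whichever of $b_{i+1},b'_{i+1}$ is chosen at the later of those two steps is, by the freshness requirement on $\cal C$'s choices (nothing chosen ``so far'' may be reused), distinct from the one chosen at the earlier step. Hence $b_{i+1}\neq b'_{i+1}$ and the induction goes through.

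The step I expect to require the most care — and the only genuine obstacle — is the claim that the two triggering $\cal H$-moves are genuinely distinct and therefore made at distinct steps; in particular one must rule out the dangerous scenario in which $b_{i+1}$ and $b'_{i+1}$ are produced as twin responses within a single step (for instance if the relevant $\cal H$-move lived at an internal node common to $w$ and $w'$). This is precisely where the induction hypothesis $b_i\neq b'_i$ is used, together with the irreversibility of moves at choice quantifiers, which guarantees that in any fixed thread the literal to which $\ade y\hspace{1pt}\overline{p}(a_i,y)$ is eventually resolved is unique; I would spell that reduction out explicitly rather than leave it to ``one can see''.
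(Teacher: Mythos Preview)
Your inductive framework and base case match the paper's, but the inductive step has a genuine gap. You assert that the position $p(a_{i+1},b_{i+1})\mlc\bigl(\overline{p}(a_i,b_i)\mld F\bigr)$ in thread $w$ arose via Prescription~(iv) with the \emph{triggering} $\cal H$-move being the choice of $b_i$ in $\ade y\,\overline{p}(a_i,y)$. That is not forced: the right conjunct of each $?$-thread carries \emph{two} $\ade y\,\overline{p}(\cdot,y)$ subcomponents, and Prescription~(iv) fires on whichever of them $\cal H$ resolves to a literal $\overline{p}(e,f)$ with $p(e,f)$ already activated. The move producing $\overline{p}(a_i,b_i)$ may come before that trigger (with $p(a_i,b_i)$ not yet activated, so $\cal C$ stays silent) or after it (once the left conjunct is already the bare literal $p(a_{i+1},b_{i+1})$, so $\cal C$ has nothing further to do).

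This matters precisely in the ``dangerous scenario'' you flag. Suppose at a common ancestor of $w$ and $w'$ the formula is $\ada y\,p(a_{i+1},y)\mlc\bigl(\ade y\,\overline{p}(a_i,y)\mld\ade y\,\overline{p}(c,y)\bigr)$, and $\cal H$ there resolves the \emph{second} disjunct to $\overline{p}(c,d)$ with $p(c,d)$ activated. Then $\cal C$ picks the fresh constant for $y$ in $\ada y\,p(a_{i+1},y)$ \emph{once}, at the common ancestor; only afterwards do the threads split and $\cal H$ choose the distinct values $b_i,b'_i$ in $\ade y\,\overline{p}(a_i,y)$. Both threads inherit the \emph{same} left conjunct, so $b_{i+1}=b'_{i+1}$, and your argument collapses. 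Your appeal to irreversibility of choice quantifiers in the $\ade y\,\overline{p}(a_i,y)$ slot does not touch this case, because the trigger lives in the other slot. The paper closes exactly this hole via Lemma~\ref{j29w}: the constants $a_i$ and $c$ were introduced simultaneously by Prescription~(ii), yet if $p(c,d)$ were activated then (by Lemma~\ref{j29v}) $c$ would be reachable, and since $a_i$ is also reachable, Lemma~\ref{j29w} would force $\atime(c)\neq\atime(a_i)$ --- a contradiction. So $\cal C$ never responds to a move in the $c$-slot before the threads diverge, and only then does freshness give $b_{i+1}\neq b'_{i+1}$. You need that ingredient (or an equivalent) to make the step go through.
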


\begin{proof} We show that $b_i\not= b'_i$ by induction on $i\in\{1,\ldots,n\}$. For $i=1$ this is immediate, because $b_1=\mathfrak{n}_1$ and $b'_1=\mathfrak{n}_2$. 

Now consider any $i\in\{2,\ldots,n\}$. Let us trace the history of the two threads $w$ and $w'$ of the $?$-component in which $p(a_i,b_i)$ and $p(a_i,b'_i)$ were  ``activated'' (i.e., first emerged), respectively.  Originally, in both threads we had
\[\ade x\ada y\hspace{1pt}p(x,y)\mlc \bigl(\ada x\ade y\hspace{1pt}\overline{p}(x,y)\mld \ada x\ade y\hspace{1pt}\overline{p}(x,y)\bigr),\]
which later evolved to 
\begin{equation}\label{jul3a}
\ada y\hspace{1pt}p(a_i,y)\mlc \bigl(\ada x\ade y\hspace{1pt}\overline{p}(x,y)\mld \ada x\ade y\hspace{1pt}\overline{p}(x,y)\bigr).
\end{equation}
In both threads,  $\cal C$'s response according to Prescription (ii) brought (\ref{jul3a}) down to 
\begin{equation}\label{jul2a}
\ada y\hspace{1pt}p(a_i,y)\mlc \bigl(\ade y\hspace{1pt}\overline{p}(a_{i-1},y)\mld \ade y\hspace{1pt}\overline{p}(c,y)\bigr)
\end{equation}
(or $\ada y\hspace{1pt}p(a_i,y)\mlc \bigl(\ade y\hspace{1pt}\overline{p}(c,y)\mld \ade y\hspace{1pt}\overline{p}(a_{i-1},y)\bigr)$, but this case is similar) for some $c$ different from $a_{i-1}$. Here we see that at the time of the action that resulted in (\ref{jul2a}), the two threads $w$ and $w'$ were not separated yet, that is, whatever we have said so far, was happening in the common ancestor of the two threads. This is so because, otherwise, $\cal C$ would have chosen distinct $a_{i-1}$s in the two threads. 

But the two threads had to diverge at some point, because otherwise $b_{i-1}$ and $b'_{i-1}$, chosen later by $\cal H$ for $y$ in $\ade y\hspace{1pt}\overline{p}(a_{i-1},y)$, would have to be the same, which, however, is not the case by the induction hypothesis. 
 
If the two threads diverged  before in any way modifying (\ref{jul2a}), then $b_i$ and $b'_i$, as constants chosen later by $\cal C$ in different threads, are different, and we are done. 

Now suppose  some change happened in (\ref{jul2a}) before the two threads diverged.  What could have been such a change?  $\cal C$ would not have moved in (\ref{jul2a})  until $\cal H$ had made a move there first. But $\cal H$ could not have moved within the $\ade y\hspace{1pt}\overline{p}(a_{i-1},y)$ part of (\ref{jul2a}) until the threads diverged (because in $w$ it had to select $b_{i-1}$ for $y$ while in $w'$ select $b'_{i-1}$; these two, by the induction hypothesis, are distinct). So, the only possible event is that $\cal H$ moved   
within the $\ade y\hspace{1pt}\overline{p}(c,y)$ part of (\ref{jul2a}), namely, brought that part down to $\overline{p}(c,d)$ for some $d$. According to Prescription (iv),  $\cal C$ responds to such a move only if the literal $p(c,d)$ is (already) activated, which, by Lemma \ref{j29v}, is the same as to say that $p(c,d)$ is reachable. But if $p(c,d)$ is reachable, then so is $c$; and, of course, $a_{i-1}$ is also reachable. Therefore, by Lemma \ref{j29w}, $\atime(c)\not=\atime(a_{i-1})$. This is a contradiction, because $a_{i-1}$ and $c$ were ``activated'' simultaneously when $\cal C$ brought (\ref{jul3a}) down to (\ref{jul2a}). Thus, $\cal C$ did not respond to $\cal H$'s action, and what we now have in the two, not-yet-diverged threads $w$ and $w'$ is 
\begin{equation}\label{jul2b}
\ada y\hspace{1pt}p(a_i,y)\mlc \bigl(\ade y\hspace{1pt}\overline{p}(a_{i-1},y)\mld \overline{p}(c,d)\bigr).
\end{equation}
Now, the only next event in the evolution of the two threads is that they, at last, diverge (so that $\cal H$ can choose the different constants $b_{i-1}$ and $b'_{i-1}$ for $y$ in $\ade y\hspace{1pt}\overline{p}(a_{i-1},y)$). But once the threads diverge, as noted earlier, $\cal C$ will later choose different constants $b_i$ and $b'_i$ for $y$ in the $\ada y\hspace{1pt}p(a_i,y)$ component of the two threads. Showing that $b_i\not=b'_i$ was exactly our goal.  
\end{proof}

Let us say that a chain is {\bf complete} iff its last literal is  $\overline{p}(1,b)$ for some $b$. 
 
\begin{lemma}\label{j29y}
Either there is no complete $p(\mathfrak{m},\mathfrak{n}_1)$-headed chain, or there is no complete $p(\mathfrak{m},\mathfrak{n}_2)$-headed chain.  
\end{lemma}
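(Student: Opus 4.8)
The plan is to argue by contradiction: suppose there is both a complete $p(\mathfrak{m},\mathfrak{n}_1)$-headed chain $C_1$ and a complete $p(\mathfrak{m},\mathfrak{n}_2)$-headed chain $C_2$. The pivotal preliminary observation, which I would establish first, is that a negative literal of the form $\overline{p}(1,b)$ can become activated \emph{only in the recurrence-free component}. To see this one goes through the four prescriptions, tracking which player supplies each argument of each literal. The $!$-component is $!\,\ade x\ada y\,p(x,y)$, so every literal ever activated there is positive; hence no $\overline{p}(1,\cdot)$ lives there. Inside any thread of the $?$-component, a $\overline{p}$-literal $\overline{p}(a,c)$ can appear only after $\cal C$, following Prescription (ii), has chosen $a$ for $x$ in an $\ada x\ade y\,\overline{p}(x,y)$ subcomponent (this is an $\ada$-choice, so $\cal H$ could not have made it), and Prescription (ii) always chooses $a$ \emph{fresh}; but $1$ was spent by Prescription (i) at the very start, hence is never fresh again, so $a\neq1$. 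Consequently the only activated literal that can look like $\overline{p}(1,\cdot)$ is the single one, call it $\overline{p}(1,e)$, that the recurrence-free component $\ade y\,\overline{p}(1,y)$ can come to contain, where $e$ --- if it exists at all --- is the constant $\cal H$ chooses for $y$ there.

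Granting this, the last literal of $C_1$ is $\overline{p}(1,b)$ for some $b$ (by completeness) and is activated (being part of a chain), so it must be exactly $\overline{p}(1,e)$, and likewise for $C_2$; in particular the two trailing second-arguments coincide. I would then show that $C_1$ and $C_2$ have the same parameter. Let their parameters be $n_1,n_2$ and let $S_1,S_2$ be the semichains obtained by deleting the trailing negative literals; these are headed semichains ending in $p(1,\cdot)$, since along a complete chain the final first-argument is $1$. A routine point here is that any initial segment of a chain or semichain that stops at a positive literal is again a headed semichain. If, say, $n_1<n_2$, then the initial segment of $S_2$ that forms a semichain of parameter $n_1$ has the same parameter as $S_1$, so Lemma~\ref{j27a} forces their final first-arguments to coincide, i.e.\ the $n_1$-th positive literal of $S_2$ also carries $1$; but Lemma~\ref{j27a} applied to $S_2$ itself says the first-arguments along $S_2$ have strictly decreasing activation times, hence are pairwise distinct, so positions $n_1$ and $n_2$ of $S_2$ cannot both carry $1$, a contradiction. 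Hence $n_1=n_2$; write $n$ for the common value.

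With equal parameters, Lemma~\ref{j27a} makes the entire first-argument sequences of $C_1$ and $C_2$ identical, so the hypotheses of Lemma~\ref{j29x} are met; that lemma then tells us $C_1$ and $C_2$ disagree in \emph{every} second argument, in particular in the trailing one. This contradicts the conclusion above that both trailing literals equal $\overline{p}(1,e)$, and the lemma follows.

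One loose end must be dispatched: the degenerate parameter $n=1$, where a headed chain is merely $p(\mathfrak{m},\mathfrak{n}_i),\overline{p}(\mathfrak{m},\mathfrak{n}_i)$ and is complete only if $\mathfrak{m}=1$. But if $\mathfrak{m}=1$, the preliminary observation already rules out any headed chain of parameter greater than $1$, since such a chain would require $\overline{p}(1,\mathfrak{n}_i)$ to appear inside a $?$-thread as a threadmate partner; so $C_1$ and $C_2$ would both have this short shape, and completeness would then force both $\overline{p}(1,\mathfrak{n}_1)$ and $\overline{p}(1,\mathfrak{n}_2)$ to be activated at once in the recurrence-free component --- impossible, as that component ever hosts only one literal and $\mathfrak{n}_1\neq\mathfrak{n}_2$. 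When $\mathfrak{m}\neq1$, a parameter-$1$ chain ends in $\overline{p}(\mathfrak{m},\mathfrak{n}_i)$ and so is never complete, whence both complete chains have parameter at least $2$ and the main argument applies unchanged. The step I expect to be the real obstacle is the preliminary observation itself: everything downstream is bookkeeping on top of Lemmas~\ref{j27a} and~\ref{j29x}, but that observation has to be extracted carefully from the four prescriptions, keeping scrupulous track of which player, and which prescription, is responsible for each argument in each literal position.
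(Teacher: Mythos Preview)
Your proof is correct and shares the paper's core idea: both arguments combine Lemma~\ref{j27a}, Lemma~\ref{j29x}, and the observation that any activated literal of the form $\overline{p}(1,\cdot)$ must live in the recurrence-free component (hence there is at most one such literal). The difference is in economy. You spend a paragraph proving that the two complete chains have equal parameter $n_1=n_2$ before invoking Lemma~\ref{j29x}; the paper bypasses this by simply truncating the longer chain to the length of the shorter one---an initial even-length segment of a chain is again a chain, so Lemma~\ref{j29x} applies directly to the shorter chain and the truncation, yielding $b_n\neq b'_n$ at once. Your separate handling of the degenerate case $n=1$ is also unnecessary: the main argument (whether yours or the paper's) already covers it uniformly, since Lemma~\ref{j29x} has a trivial base case. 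None of this is wrong, just longer than needed.
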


\begin{proof}
Assume, for a contradiction, that there is a complete $p(\mathfrak{m},\mathfrak{n}_1)$-headed chain 
\[p(a_1,b_1), \ \overline{p}(a_1,b_1), \ \ldots, \ p(a_n,b_n), \ \overline{p}(a_n,b_n)\] 
and also there is a 
complete $p(\mathfrak{m},\mathfrak{n}_2)$-headed chain 
\[p(a'_1,b'_1),\  \overline{p}(a'_1,b'_1),\ \ldots,\ p(a'_{n'},b'_{n'}),\  \overline{p}(a'_{n'},b'_{n'}).\]
Without loss of generality here we may assume that $n\leq n'$. According to Lemma \ref{j27a}, we have $a_{1}=a'_{1},\ldots,a_{n}=a'_{n}$.  And, by Lemma \ref{j29x}, $b_n\not= b'_n$. Thus, as $a_n=1$, we have two non-identical activated literals $\overline{p}(1,b_n)$ and  $\overline{p}(1,b'_n)$. This is however impossible. It is impossible because the constant $1$ was chosen by $\cal C$ only in the recurrence-free component (when following Prescription (i)) which, as a result, was brought down to $\ade y\hspace{1pt}\overline{p}(1,y)$; and, since that component is recurrence-free and thus cannot be replicated, $\cal H$ would not have a chance to make two different choices $b_n$ and $b'_n$ for $y$ there to further bring it down to $\overline{p}(1,b_n)$ and $\overline{p}(1,b'_n)$. 
\end{proof}

To complete our proof, pick an $i\in\{1,2\}$ such that there is no complete $p(\mathfrak{m},\mathfrak{n}_i)$-headed chain (the existence of such an $i$ is guaranteed by Lemma \ref{j29y}). Let us say that an atom is {\bf $i$-reachable} iff it appears in some $p(\mathfrak{m},\mathfrak{n}_i)$-headed semichain. 
We choose an interpretation that makes all $i$-reachable atoms false, and makes all other atoms true. It is left to the reader to convince himself or herself that, under this interpretation,  (\ref{f4}) is lost by $\cal H$.  In this exercise, whether $!$ means $\bst$ or $\st$ is of no relevance.

\section{Countable branching recurrence validates long production}

In this section we are going to show that, with $!=\bst$, long production is valid in the strong form of (\ref{july1a}), which we rewrite as  
\begin{equation}\label{e1}
\overline{\cal P}\ \mld \ ?\bigl(P\mlc (\overline{P}\mld \overline{Q})\bigr)\ \mld\  ?\bigl((R\mld Q)\mlc \overline{R}\bigr)\ \mld \ !\hspace{1pt}{\cal R}.
\end{equation}
Here we have used the calligraphic $\cal R$ for one of the two occurrences of $R$ in order to differentiate it from the other occurrence. 
Similarly for $\cal P$. This is merely for readability. 

We refer to the four disjuncts of (\ref{e1}) as the {\bf recurrence-free component}, the {\bf left $?$-component}, the {\bf right $?$-component} and the 
{\bf $!$-component}, respectively. As before, we see formulas as games. Some other earlier terminology and conventions may apply as well. 
 
For our purposes, we want to agree on a simplified way of schematically representing different stages of a play over (\ref{e1}). We explain this way in a semiformal fashion. 
Initially, both $?$-components and the $!$-component have a single thread $\epsilon$ ($\epsilon$ stands for the {\bf empty bit string}). To indicate this, we use $\epsilon$ as a subscript and, after omitting the external disjunction symbols as well as $?$ and $!$, we rewrite (\ref{e1}) as 
\begin{equation}\label{e2}
\overline{\cal P} \hspace{20pt}   P_\epsilon\mlc (\overline{P}_\epsilon \mld\overline{Q}_\epsilon) \hspace{20pt}  (R_\epsilon\mld Q_\epsilon)\mlc \overline{R}_\epsilon\hspace{20pt}  {\cal R}_\epsilon .
\end{equation}
Our purported uniform solution/strategy for (\ref{e1}) --- let us call that strategy $\cal K$ --- makes two initialization moves consisting in replicating the (so far the only) thread $\epsilon$ of both $?$-components. This  results in the position that we represent as 

\begin{equation}\label{e3}
\overline{\cal P} \hspace{20pt}   P_0\mlc (\overline{P}_0 \mld\overline{Q}_0)\hspace{10pt}P_1\mlc (\overline{P}_1 \mld\overline{Q}_1) \hspace{20pt} (R_0\mld Q_0)\mlc \overline{R}_0\hspace{10pt}  (R_1\mld Q_1)\mlc \overline{R}_1\hspace{20pt}  {\cal R}_\epsilon .
\end{equation}
Here we see the subscripts $0$ and $1$ because these are (the names of) the threads into which $\epsilon$ turns after it is split. 
 
After the above initialization moves, $\cal K$ establishes synchronization between the following pairs of subgames: $(\overline{\cal P},P_0)$, $(\overline{Q}_0,Q_0)$ and $(\overline{R}_0,{\cal R}\epsilon)$. This arrangement is shown in Figure 3, with synchronizations indicated by arcs.

\begin{center} 
\begin{picture}(434,119)
\put(57,87){\line(-2,-1){23}}
\put(57,87){\line(0,1){10}}

\put(79,55){\line(1,0){261}}
\put(79,55){\line(0,1){10}}
\put(340,55){\line(0,1){10}}

\put(363,55){\line(1,0){30}}
\put(363,55){\line(0,1){10}}
\put(393,55){\line(0,1){10}}

\put(54,99){\small $\overline{\cal P}$}
\put(28,67){\small $ P_0\mlc (\overline{P}_0 \mld\overline{Q}_0)$}
\put(310,67){\small $(R_0\mld Q_0)\mlc \overline{R}_0$}
\put(389,67){\small ${\cal R}_\epsilon$}

\put(28,35){\small $ P_1\mlc (\overline{P}_1 \mld\overline{Q}_1)$}
\put(310,35){\small $(R_1\mld Q_1)\mlc \overline{R}_1$}

\put(198,10){\bf Figure 3}

\end{picture}
\end{center}

If the environment does not make any replicative moves in the $!$-component, the situation represented by Figure 3 will persist throughout the rest of the game. Of course, $\overline{\cal P}$ will probably no longer be the original $\overline{\cal P}$ at later stages of the play, but what matters is that, whatever game the original $\overline{\cal P}$ evolves to (which we continue denoting by $\overline{\cal P}$), it will essentially be the negation of to whatever game the original $P_0$ evolves; ``essentially'' in the sense explained in Section \ref{s3}, which guarantees that at least one of the two games will be eventually won by $\cal K$ and --- again as in Section \ref{s3} --- we can safely pretend that exactly one of them will be in fact won.
Similarly for the other components of the game shown in Figure 3. A straightforward analysis of the situation in the present scenario (the scenario where the environment does not make replications in the $!$-component) shows that $\cal K$ wins as desired. In this analysis, the presence of the two subgames $ P_1\mlc (\overline{P}_1 \mld\overline{Q}_1)$ and $(R_1\mld Q_1)\mlc \overline{R}_1$ displayed at the bottom of Figure 3 is irrelevant. As we are going to see, the role of these two is to maintain ``fresh'' copies of the original  $ P\mlc (\overline{P} \mld\overline{Q})$ and $(R\mld Q)\mlc \overline{R}$. The same applies to the $\overline{P}_0$ and $R_0$ components. Here and later, by ``fresh'' we mean that $\cal K$ has not made any moves in these (sub)games. Of course, $\cal K$ has no way to prevent the environment from making moves in these subgames. But such moves are harmless in the sense that they cannot create any problems for $\cal K$ later if it decides to start synchronizing such a ``fresh'' subgame $X$ with another ``fresh'' subgame $\overline{X}$. The reason, again as explained in Section \ref{s3}, is that we deal with static games (that is, for any interpretation $^*$, the games $P^*,Q^*,R^*$ are static). 

We continue our description of the strategy $\cal K$. The case of the environment making no replications in the $!$-component has been already fully covered. Now, suppose the environment replicates the thread $\epsilon$ of the $!$-component (at this point, $\epsilon$ is the only thread there). That is, the environment splits the ${\cal R}_\epsilon$ component into two copies ${\cal R}_0$ and ${\cal R}_1$. In response, $\cal K$ replicates the $(R_0\mld Q_0)\mlc \overline{R}_0$ component, turning it into two child copies $(R_{00}\mld Q_{00}) \mlc\overline{R}_{00}$
and $(R_{01}\mld Q_{01}) \mlc\overline{R}_{01}$. It does the same with   
$P_1\mlc (\overline{P}_1\mld \overline{Q}_1)$ and  $(R_1\mld Q_1)\mlc \overline{R}_1$. Figure 4 shows all components that we will be dealing with from now on, and also shows the synchronization arrangements that $\cal K$ will maintain. The ``$\times$'' under $Q_{01}$ indicates that, from now on, this component is ``wasted'' in the sense that it will not and cannot be synchronized with anything.  

\begin{center} \begin{picture}(410,151)

\put(57,119){\line(-2,-1){23}}
\put(57,119){\line(0,1){10}}

\put(79,87){\line(1,0){257}}
\put(79,87){\line(0,1){10}}
\put(336,87){\line(0,1){10}}

\put(363,87){\line(1,0){30}}
\put(363,87){\line(0,1){10}}
\put(393,87){\line(0,1){10}}

\put(54,131){\small $\overline{\cal P}$}
\put(28,99){\small $ P_0\mlc (\overline{P}_0 \mld\overline{Q}_0)$}
\put(303,99){\small $(R_{00}\mld Q_{00})\mlc \overline{R}_{00}$}
\put(389,99){\small ${\cal R}_0$}

\put(56,87){\line(0,1){10}}
\put(56,87){\line(-5,-2){30}}

\put(363,55){\line(1,0){30}}
\put(363,55){\line(0,1){10}}
\put(393,55){\line(0,1){10}}

\put(281,55){\line(1,0){30}}
\put(281,55){\line(0,1){10}}
\put(311,55){\line(0,1){10}}

\put(79,55){\line(1,0){174}}
\put(79,55){\line(0,1){10}}
\put(253,55){\line(0,1){10}}

\put(20,67){\small $ P_{10}\mlc (\overline{P}_{10} \mld\overline{Q}_{10})$}
\put(303,67){\small $(R_{01}\mld Q_{01})\mlc \overline{R}_{01}$}
\put(389,67){\small ${\cal R}_1$}

\put(332,59){\small $\times$}

\put(220,67){\small $(R_{10}\mld Q_{10})\mlc \overline{R}_{10}$}

\put(20,35){\small $ P_{11}\mlc (\overline{P}_{11} \mld\overline{Q}_{11})$}
\put(220,35){\small $(R_{11}\mld Q_{11})\mlc \overline{R}_{11}$}

\put(198,10){\bf Figure 4}
\end{picture}
\end{center}

Again, if the environment makes no further replications, then the synchronization shown in Figure 4 obviously guarantees a win for $\cal K$. Let us now say ${\cal R}_1$ is replicated (the other possibility would be replicating ${\cal R}_0$). $\cal K$'s reaction is replicating 
 $(R_{01}\mld Q_{01})\mlc \overline{R}_{01}$,  
$(R_{10}\mld Q_{10})\mlc \overline{R}_{10}$, 
$ P_{11}\mlc (\overline{P}_{11} \mld\overline{Q}_{11})$ and $(R_{11}\mld Q_{11})\mlc \overline{R}_{11}$, followed by the synchronization arrangements shown in Figure 5.

\begin{center} \begin{picture}(410,183)

\put(57,151){\line(-2,-1){23}}
\put(57,151){\line(0,1){10}}

\put(79,119){\line(1,0){257}}
\put(79,119){\line(0,1){10}}
\put(336,119){\line(0,1){10}}

\put(363,119){\line(1,0){30}}
\put(363,119){\line(0,1){10}}
\put(393,119){\line(0,1){10}}

\put(54,163){\small $\overline{\cal P}$}
\put(28,131){\small $ P_0\mlc (\overline{P}_0 \mld\overline{Q}_0)$}
\put(303,131){\small $(R_{00}\mld Q_{00})\mlc \overline{R}_{00}$}
\put(389,131){\small ${\cal R}_0$}

\put(56,119){\line(0,1){10}}
\put(56,119){\line(-5,-2){30}}

\put(51,87){\line(0,1){10}}
\put(51,87){\line(-5,-2){30}}

\put(363,87){\line(1,0){30}}
\put(363,87){\line(0,1){10}}
\put(393,87){\line(0,1){10}}

\put(273,87){\line(1,0){31}}
\put(273,87){\line(0,1){10}}
\put(304,87){\line(0,1){10}}

\put(79,87){\line(1,0){163}}
\put(79,87){\line(0,1){10}}
\put(242,87){\line(0,1){10}}

\put(20,99){\small $ P_{10}\mlc (\overline{P}_{10} \mld\overline{Q}_{10})$}
\put(296,99){\small $(R_{010}\mld Q_{010})\mlc \overline{R}_{010}$}
\put(389,99){\small ${\cal R}_{10}$}

\put(330,91){\small $\times$}

\put(205,99){\small $(R_{100}\mld Q_{100})\mlc \overline{R}_{100}$}

\put(363,55){\line(1,0){30}}
\put(363,55){\line(0,1){10}}
\put(393,55){\line(0,1){10}}

\put(273,55){\line(1,0){31}}
\put(273,55){\line(0,1){10}}
\put(304,55){\line(0,1){10}}

\put(180,55){\line(1,0){33}}
\put(180,55){\line(0,1){10}}
\put(213,55){\line(0,1){10}}

\put(76,55){\line(1,0){73}}
\put(76,55){\line(0,1){10}}
\put(149,55){\line(0,1){10}}

\put(10,67){\small $ P_{110}\mlc (\overline{P}_{110} \mld\overline{Q}_{110})$}
\put(296,67){\small $(R_{011}\mld Q_{011})\mlc \overline{R}_{011}$}
\put(389,67){\small ${\cal R}_{11}$}

\put(330,59){\small $\times$}
\put(239,59){\small $\times$}
\put(205,67){\small $(R_{101}\mld Q_{101})\mlc \overline{R}_{101}$}

\put(112,67){\small $(R_{110}\mld Q_{110})\mlc \overline{R}_{110}$}

\put(10,35){\small $ P_{111}\mlc (\overline{P}_{111} \mld\overline{Q}_{111})$}
\put(112,35){\small $(R_{111}\mld Q_{111})\mlc \overline{R}_{111}$}

\put(198,10){\bf Figure 5}
\end{picture}
\end{center}

Assume that, next time, the environment replicates ${\cal R}_0$. The situation resulting from $\cal K$'s reaction is shown in Figure 6.

\begin{center} \begin{picture}(402,215)

\put(57,183){\line(-2,-1){23}}
\put(57,183){\line(0,1){10}}

\put(79,151){\line(1,0){254}}
\put(79,151){\line(0,1){10}}
\put(333,151){\line(0,1){10}}

\put(363,151){\line(1,0){30}}
\put(363,151){\line(0,1){10}}
\put(393,151){\line(0,1){10}}

\put(54,195){\small $\overline{\cal P}$}
\put(28,163){\small $ P_0\mlc (\overline{P}_0 \mld\overline{Q}_0)$}
\put(297,163){\small $(R_{000}\mld Q_{000})\mlc \overline{R}_{000}$}
\put(389,163){\small ${\cal R}_{00}$}

\put(56,151){\line(0,1){10}}
\put(56,151){\line(-5,-2){30}}

\put(51,119){\line(0,1){10}}
\put(51,119){\line(-5,-2){30}}

\put(363,119){\line(1,0){30}}
\put(363,119){\line(0,1){10}}
\put(393,119){\line(0,1){10}}

\put(273,119){\line(1,0){31}}
\put(273,119){\line(0,1){10}}
\put(304,119){\line(0,1){10}}

\put(79,119){\line(1,0){163}}
\put(79,119){\line(0,1){10}}
\put(242,119){\line(0,1){10}}

\put(20,131){\small $ P_{10}\mlc (\overline{P}_{10} \mld\overline{Q}_{10})$}
\put(296,131){\small $(R_{010}\mld Q_{010})\mlc \overline{R}_{010}$}
\put(389,131){\small ${\cal R}_{10}$}

\put(330,123){\small $\times$}

\put(205,131){\small $(R_{100}\mld Q_{100})\mlc \overline{R}_{100}$}

\put(363,87){\line(1,0){30}}
\put(363,87){\line(0,1){10}}
\put(393,87){\line(0,1){10}}

\put(273,87){\line(1,0){31}}
\put(273,87){\line(0,1){10}}
\put(304,87){\line(0,1){10}}

\put(180,87){\line(1,0){33}}
\put(180,87){\line(0,1){10}}
\put(213,87){\line(0,1){10}}

\put(76,87){\line(1,0){73}}
\put(76,87){\line(0,1){10}}
\put(149,87){\line(0,1){10}}

\put(10,99){\small $ P_{110}\mlc (\overline{P}_{110} \mld\overline{Q}_{110})$}
\put(296,99){\small $(R_{011}\mld Q_{011})\mlc \overline{R}_{011}$}
\put(389,99){\small ${\cal R}_{11}$}

\put(330,91){\small $\times$}
\put(239,91){\small $\times$}
\put(331,59){\small $\times$}
\put(205,99){\small $(R_{101}\mld Q_{101})\mlc \overline{R}_{101}$}

\put(112,99){\small $(R_{110}\mld Q_{110})\mlc \overline{R}_{110}$}

\put(-1,35){\small $ P_{1111}\mlc (\overline{P}_{1111} \mld\overline{Q}_{1111})$}
\put(193,35){\small $(R_{1111}\mld Q_{1111})\mlc \overline{R}_{1111}$}
\put(193,67){\small $(R_{1110}\mld Q_{1110})\mlc \overline{R}_{1110}$}

\put(-1,67){\small $ P_{1110}\mlc (\overline{P}_{1110} \mld\overline{Q}_{1110})$}
\put(296,67){\small $(R_{001}\mld Q_{001})\mlc \overline{R}_{001}$}
\put(389,67){\small ${\cal R}_{01}$}

\put(46,87){\line(0,1){10}}
\put(46,87){\line(-3,-1){37}}

\put(363,55){\line(1,0){30}}
\put(363,55){\line(0,1){10}}
\put(393,55){\line(0,1){10}}

\put(268,55){\line(1,0){36}}
\put(268,55){\line(0,1){10}}
\put(304,55){\line(0,1){10}}

\put(72,55){\line(1,0){161}}
\put(72,55){\line(0,1){10}}
\put(233,55){\line(0,1){10}}

\put(198,10){\bf Figure 6}

\end{picture}
\end{center}

Do you see a pattern here? As an exercise, try to trace three more steps, namely, in the scenario where the environment replicates ${\cal R}_{00}$, then ${\cal R}_{10}$, and then ${\cal R}_{11}$. Once you are done, you have understood the strategy and there is no need to read our further --- general --- description of it. 

In general terms, the work of $\cal K$ is divided into {\bf stages}, with each stage represented by a diagram in the style of Figures 3-6. Figure 3 shows  the first stage, which includes initialization as described earlier and subsequent maintainance of synchronization between three pairs of subgames. 

Now let us consider stage $\# n$ for an arbitrary $n\geq 1$. The corresponding diagram will look like the one shown in Figure 7. Here we provide additional explanations for that diagram:
\begin{itemize}
  \item $w_1,\ldots,w_n$ stand for the bit strings representing the ``currently existing'' threads of the $!$-component; there are exactly $n$ such threads. 
  \item  $z$ is $1^n$ (the string of $n$ ``$1$''s). Both of the $?$-components have thread $z$, reserved for the purpose of keeping a ``fresh copy'' of the corresponding game (game $P\mlc (\overline{P}\mld \overline{Q})$ in the left $?$-component and game $(R\mld Q)\mlc  \overline{R}$ in the right $?$-component).
  \item $u_1,\ldots,u_n$, in addition to $z$, stand for the bit strings representing the ``currently existing'' threads of the left $?$-component; there are exactly $n+1$ such threads. Here  each $u_i$ is $1^{i-1}0$ (the string of $i-1$ ``$1$''s followed by a ``$0$'').   
  \item  $k_1,\ldots,k_n$ are positive integers and, for each $j\in\{1,\ldots,n\}$ and $e\in\{1,\ldots,k_j\}$, $v_{j}^{e}$ is a bit string that represents some thread of the right $?$-component. Together with $z$, such $v_{j}^{e}$s are all of the (pairwise distinct) threads of that component.
\end{itemize}

\begin{center} \begin{picture}(432,305)
\put(32,285){\small $\overline{\cal P}$}
\put(36,272){\line(0,1){10}}
\put(36,272){\line(-3,-1){31}}

\put(0,253){\small $ P_{u_{1}}\mlc \hspace{1pt}(\overline{P}_{u_{1}} \mld\hspace{1pt}\overline{Q}_{u_{1}})$}
\put(100,253){\small $(R_{v_{1}^{1}}\mld Q_{v_{1}^{1}})\hspace{1pt}\mlc\hspace{1pt} \overline{R}_{v_{1}^{1}}$}
\put(419,253){\small ${\cal R}_{w_{1}}$}
\put(346,242){\small $\times$}
\put(225,242){\small $\times$}
\put(194,253){\small $(R_{v_{1}^{2}}\mld \hspace{1pt}Q_{v_{1}^{2}})\hspace{1pt}\mlc\hspace{1pt} \overline{R}_{v_{1}^{2}}$}
\put(310,253){\small $(R_{v_{1}^{k_{1}}}\mld\hspace{1pt} Q_{v_{1}^{k_{1}}})\hspace{2pt}\mlc\hspace{2pt} \overline{R}_{v_{1}^{k_{1}}}$}
\put(286,253){\large $\ldots$}

\put(36,240){\line(0,1){10}}
\put(36,240){\line(-3,-1){15}}

\put(65,240){\line(1,0){69}}
\put(65,240){\line(0,1){10}}
\put(134,240){\line(0,1){10}}
\put(166,240){\line(1,0){36}}
\put(166,240){\line(0,1){10}}
\put(202,240){\line(0,1){10}}
\put(260,240){\line(1,0){10}}
\put(260,240){\line(0,1){10}}
\put(318,240){\line(0,1){10}}
\put(318,240){\line(-1,0){10}}
\put(387,240){\line(1,0){36}}
\put(387,240){\line(0,1){10}}
\put(423,240){\line(0,1){10}}

\put(218,225){$\bullet$} 
\put(218,219){$\bullet$} 
\put(218,213){$\bullet$}

\put(0,193){\small $ P_{u_{i-1}}\hspace{-3pt}\mlc\hspace{-2pt} (\overline{P}_{u_{i-1}}\hspace{-4pt} \mld\hspace{-2pt}\overline{Q}_{u_{i-1}}\hspace{-1pt})$}
\put(100,193){\small $(R_{v_{i-1}^{1}}\hspace{-5pt}\mld\hspace{-2pt} Q_{v_{i-1}^{1}}\hspace{-1pt})\hspace{-1pt}\mlc\hspace{-1pt} \overline{R}_{v_{i-1}^{1}}$}
\put(419,193){\small ${\cal R}_{w_{i-1}}$}
\put(345,182){\small $\times$}
\put(224,182){\small $\times$}
\put(194,193){\small $(R_{v_{i-1}^{2}}\hspace{-5pt}\mld\hspace{-2pt} Q_{v_{i-1}^{2}}\hspace{-1pt})\hspace{-1pt}\mlc\hspace{-1pt} \overline{R}_{v_{i-1}^{2}}$}
\put(310,193){\small $(R_{v_{i-1}^{k_{i-1}}}\hspace{-4pt}\mld\hspace{-2pt} Q_{v_{i-1}^{k_{i-1}}})\hspace{-1pt}\mlc\hspace{-1pt} \overline{R}_{v_{i-1}^{k_{i-1}}}$}
\put(286,193){\large $\ldots$} 

\put(6,202){\line(3,1){15}}
\put(36,180){\line(0,1){10}}
\put(36,180){\line(-3,-1){31}}
\put(65,180){\line(1,0){69}}
\put(65,180){\line(0,1){10}}
\put(134,180){\line(0,1){10}}
\put(166,180){\line(1,0){36}}
\put(166,180){\line(0,1){10}}
\put(202,180){\line(0,1){10}}
\put(260,180){\line(1,0){10}}
\put(260,180){\line(0,1){10}}
\put(318,180){\line(0,1){10}}
\put(318,180){\line(-1,0){10}}
\put(387,180){\line(1,0){36}}
\put(387,180){\line(0,1){10}}
\put(423,180){\line(0,1){10}}

\put(0,161){\small $ P_{u_{i }}\hspace{1pt}\mlc\hspace{1pt} (\hspace{1pt}\overline{P}_{u_{i }} \mld\hspace{1pt}\overline{Q}_{u_{i }})$}
\put(100,161){\small $(R_{v_{i }^{1}}\mld \hspace{1pt}Q_{v_{i }^{1}})\hspace{1pt}\mlc\hspace{1pt} \overline{R}_{v_{i }^{1}}$}
\put(419,161){\small ${\cal R}_{w_{i }}$}
\put(345,150){\small $\times$}
\put(224,150){\small $\times$}
\put(194,161){\small $(R_{v_{i }^{2}}\mld \hspace{1pt}Q_{v_{i }^{2}})\hspace{1pt}\mlc\hspace{1pt} \overline{R}_{v_{i }^{2}}$}
\put(310,161){\small $(R_{v_{i }^{k_{i }}}\mld \hspace{2pt}Q_{v_{i }^{k_{i }}})\hspace{2pt}\mlc \hspace{3pt}\overline{R}_{v_{i }^{k_{i }}}$}
\put(286,161){\large $\ldots$} 

\put(36,148){\line(0,1){10}}
\put(36,148){\line(-3,-1){31}}
\put(65,148){\line(1,0){69}}
\put(65,148){\line(0,1){10}}
\put(134,148){\line(0,1){10}}
\put(166,148){\line(1,0){36}}
\put(166,148){\line(0,1){10}}
\put(202,148){\line(0,1){10}}
\put(260,148){\line(1,0){10}}
\put(260,148){\line(0,1){10}}
\put(318,148){\line(0,1){10}}
\put(318,148){\line(-1,0){10}}
\put(387,148){\line(1,0){36}}
\put(387,148){\line(0,1){10}}
\put(423,148){\line(0,1){10}}

\put(0,129){\small $ P_{u_{i+1}}\hspace{-3pt}\mlc\hspace{-2pt} (\overline{P}_{u_{i+1}}\hspace{-4pt} \mld\hspace{-2pt}\overline{Q}_{u_{i+1}}\hspace{-1pt})$}
\put(100,129){\small $(R_{v_{i+1}^{1}}\hspace{-5pt}\mld\hspace{-2pt} Q_{v_{i+1}^{1}}\hspace{-1pt})\hspace{-1pt}\mlc\hspace{-1pt} \overline{R}_{v_{i+1}^{1}}$}
\put(419,129){\small ${\cal R}_{w_{i+1}}$}
\put(345,118){\small $\times$}
\put(224,118){\small $\times$}
\put(194,129){\small $(R_{v_{i+1}^{2}}\hspace{-5pt}\mld\hspace{-2pt} Q_{v_{i+1}^{2}}\hspace{-1pt})\hspace{-1pt}\mlc\hspace{-1pt} \overline{R}_{v_{i+1}^{2}}$}
\put(310,129){\small $(R_{v_{i+1}^{k_{i+1}}}\hspace{-4pt}\mld\hspace{-2pt} Q_{v_{i+1}^{k_{i+1}}})\hspace{-1pt}\mlc\hspace{-1pt} \overline{R}_{v_{i+1}^{k_{i+1}}}$}
\put(286,129){\large $\ldots$} 

\put(36,116){\line(0,1){10}}
\put(36,116){\line(-3,-1){15}}
\put(65,116){\line(1,0){69}}
\put(65,116){\line(0,1){10}}
\put(134,116){\line(0,1){10}}
\put(166,116){\line(1,0){36}}
\put(166,116){\line(0,1){10}}
\put(202,116){\line(0,1){10}}
\put(260,116){\line(1,0){10}}
\put(260,116){\line(0,1){10}}
\put(318,116){\line(0,1){10}}
\put(318,116){\line(-1,0){10}}
\put(387,116){\line(1,0){36}}
\put(387,116){\line(0,1){10}}
\put(423,116){\line(0,1){10}}

\put(218,101){$\bullet$} 
\put(218,95){$\bullet$} 
\put(218,89){$\bullet$} 

\put(0,69){\small $ P_{u_{n}}\mlc \hspace{1pt}(\overline{P}_{u_{n}} \mld\hspace{1pt}\overline{Q}_{u_{n}})$}
\put(100,69){\small $(R_{v_{n}^{1}}\mld Q_{v_{n}^{1}})\hspace{1pt}\mlc\hspace{1pt} \overline{R}_{v_{n}^{1}}$}
\put(419,69){\small ${\cal R}_{w_{n}}$}
\put(346,58){\small $\times$}
\put(225,58){\small $\times$}
\put(194,69){\small $(R_{v_{n}^{2}}\mld \hspace{1pt}Q_{v_{n}^{2}})\hspace{1pt}\mlc\hspace{1pt} \overline{R}_{v_{n}^{2}}$}
\put(310,69){\small $(R_{v_{n}^{k_{n}}}\mld\hspace{1pt} Q_{v_{n}^{k_{n}}})\hspace{2pt}\mlc\hspace{2pt} \overline{R}_{v_{n}^{k_{n}}}$}
\put(286,69){\large $\ldots$}

\put(65,56){\line(1,0){69}}
\put(65,56){\line(0,1){10}}
\put(134,56){\line(0,1){10}}
\put(166,56){\line(1,0){36}}
\put(166,56){\line(0,1){10}}
\put(202,56){\line(0,1){10}}
\put(260,56){\line(1,0){10}}
\put(260,56){\line(0,1){10}}
\put(318,56){\line(0,1){10}}
\put(318,56){\line(-1,0){10}}
\put(387,56){\line(1,0){36}}
\put(387,56){\line(0,1){10}}
\put(423,56){\line(0,1){10}}
\put(6,78){\line(3,1){15}}

\put(0,35){\small $ P_{z}\mlc (\overline{P}_{z} \mld\overline{Q}_{z})$}
\put(100,35){\small $(R_{z}\mld Q_{z})\mlc \overline{R}_{z}$}

\put(198,10){\bf Figure 7}

\end{picture}
\end{center}

The work of $\cal K$ during stage $\#n$ consists in performing the synchronization routine represented by the arcs of Figure 7. Let us observe right now (through a routine analysis left to the reader) that, if stage $\#n$ lasts forever, $\cal K$ wins. As an aside, in this case $\cal K$ wins even if $!$ means $\st$ rather than $\bst$.

Stage $\#n$ will end if and when the environment splits  one of the threads $w_1,\ldots,w_n$ of the $!$-component. Let us assume thread $w_i$ is split, which now becomes two threads: $w_i0$ and $w_i1$.    
This triggers a transition to stage $\#(n+1)$. A diagram for that stage is  shown in Figure 8. 
Note that we have placed the newly emerged thread $w_i1$ of the $!$-component at the bottom of the list of (non-$z$) threads of that component, while leaving thread $w_i0$ where thread $w_i$ was previously found. In response to the above replicative move by the environment, $\cal K$ makes a series of replications. Namely, it replicates:
\begin{itemize}
  \item  The threads $v_{i}^{1},\ldots,v_{i}^{k_i}$ of the right $?$-component --- the ones that were (in Figure 7) found in the same row as ${\cal R}_{w_i}$. This results in  two series of new threads that replace the old ones:    
  $v_{i}^{1}0,\ldots,v_{i}^{k_i}0$ and $v_{i}^{1}1,\ldots,v_{i}^{k_i}1$. 
   \item The thread $z$ of the right $?$-component, which now turns into $z0$ (i.e. $1^n0$) and $z1$ (i.e. $1^{n+1}$). 
   \item  The thread $z$ of the left $?$-component, which now turns into $z0$ (i.e. $1^n0$) and $z1$ (i.e. $1^{n+1}$). 
\end{itemize}
Where these newly born threads/copies are placed, and how the synchronization arrangements are set up or redefined for them (while preserving all other old matchings) can be seen from Figure 8. 

\begin{center} \begin{picture}(440,337)

\put(32,317){\small $\overline{\cal P}$}
\put(36,304){\line(0,1){10}}
\put(36,304){\line(-3,-1){31}}

\put(0,285){\small $ P_{u_{1}}\mlc \hspace{1pt}(\overline{P}_{u_{1}} \mld\hspace{1pt}\overline{Q}_{u_{1}})$}
\put(100,285){\small $(R_{v_{1}^{1}}\mld Q_{v_{1}^{1}})\hspace{1pt}\mlc\hspace{1pt} \overline{R}_{v_{1}^{1}}$}
\put(419,285){\small ${\cal R}_{w_{1}}$}
\put(346,274){\small $\times$}
\put(225,274){\small $\times$}
\put(194,285){\small $(R_{v_{1}^{2}}\mld \hspace{1pt}Q_{v_{1}^{2}})\hspace{1pt}\mlc\hspace{1pt} \overline{R}_{v_{1}^{2}}$}
\put(310,285){\small $(R_{v_{1}^{k_{1}}}\mld\hspace{1pt} Q_{v_{1}^{k_{1}}})\hspace{2pt}\mlc\hspace{2pt} \overline{R}_{v_{1}^{k_{1}}}$}
\put(286,285){\large $\ldots$}

\put(36,272){\line(0,1){10}}
\put(36,272){\line(-3,-1){15}}

\put(65,272){\line(1,0){69}}
\put(65,272){\line(0,1){10}}
\put(134,272){\line(0,1){10}}
\put(166,272){\line(1,0){36}}
\put(166,272){\line(0,1){10}}
\put(202,272){\line(0,1){10}}
\put(260,272){\line(1,0){10}}
\put(260,272){\line(0,1){10}}
\put(318,272){\line(0,1){10}}
\put(318,272){\line(-1,0){10}}
\put(387,272){\line(1,0){36}}
\put(387,272){\line(0,1){10}}
\put(423,272){\line(0,1){10}}

\put(219,257){$\bullet$} 
\put(219,251){$\bullet$} 
\put(219,245){$\bullet$}

\put(0,225){\small $ P_{u_{i-1}}\hspace{-3pt}\mlc\hspace{-2pt} (\overline{P}_{u_{i-1}}\hspace{-4pt} \mld\hspace{-2pt}\overline{Q}_{u_{i-1}}\hspace{-1pt})$}
\put(100,225){\small $(R_{v_{i-1}^{1}}\hspace{-5pt}\mld\hspace{-2pt} Q_{v_{i-1}^{1}}\hspace{-1pt})\hspace{-1pt}\mlc\hspace{-1pt} \overline{R}_{v_{i-1}^{1}}$}
\put(419,225){\small ${\cal R}_{w_{i-1}}$}
\put(345,214){\small $\times$}
\put(224,214){\small $\times$}
\put(194,225){\small $(R_{v_{i-1}^{2}}\hspace{-5pt}\mld\hspace{-2pt} Q_{v_{i-1}^{2}}\hspace{-1pt})\hspace{-1pt}\mlc\hspace{-1pt} \overline{R}_{v_{i-1}^{2}}$}
\put(310,225){\small $(R_{v_{i-1}^{k_{i-1}}}\hspace{-4pt}\mld\hspace{-2pt} Q_{v_{i-1}^{k_{i-1}}})\hspace{-1pt}\mlc\hspace{-1pt} \overline{R}_{v_{i-1}^{k_{i-1}}}$}
\put(286,225){\large $\ldots$} 

\put(6,234){\line(3,1){15}}
\put(36,212){\line(0,1){10}}
\put(36,212){\line(-3,-1){31}}
\put(65,212){\line(1,0){69}}
\put(65,212){\line(0,1){10}}
\put(134,212){\line(0,1){10}}
\put(166,212){\line(1,0){36}}
\put(166,212){\line(0,1){10}}
\put(202,212){\line(0,1){10}}
\put(260,212){\line(1,0){10}}
\put(260,212){\line(0,1){10}}
\put(318,212){\line(0,1){10}}
\put(318,212){\line(-1,0){10}}
\put(387,212){\line(1,0){36}}
\put(387,212){\line(0,1){10}}
\put(423,212){\line(0,1){10}}

\put(0,193){\small $ P_{u_{i}}\hspace{1pt}\mlc\hspace{1pt} (\hspace{1pt}\overline{P}_{u_{i }} \mld\hspace{1pt}\overline{Q}_{u_{i }})$}
\put(100,193){\small $(R_{v_{i}^{1}0}\hspace{-1pt}\mld \hspace{-1pt}Q_{v_{i }^{1}0})\hspace{0pt}\mlc\hspace{0pt} \overline{R}_{v_{i }^{1}0}$}
\put(419,193){\small ${\cal R}_{w_{i }0}$}
\put(345,182){\small $\times$}
\put(224,182){\small $\times$}
\put(194,193){\small $(R_{v_{i }^{2}0}\hspace{-1pt}\mld \hspace{-1pt}Q_{v_{i }^{2}0})\hspace{0pt}\mlc\hspace{0pt} \overline{R}_{v_{i }^{2}0}$}
\put(310,193){\small $(R_{v_{i }^{k_{i }}0}\hspace{-1pt}\mld \hspace{0pt}Q_{v_{i }^{k_{i }}0})\hspace{1pt}\mlc \hspace{2pt}\overline{R}_{v_{i }^{k_{i }}0}$}
\put(286,193){\large $\ldots$} 

\put(36,180){\line(0,1){10}}
\put(36,180){\line(-3,-1){31}}
\put(65,180){\line(1,0){69}}
\put(65,180){\line(0,1){10}}
\put(134,180){\line(0,1){10}}
\put(166,180){\line(1,0){36}}
\put(166,180){\line(0,1){10}}
\put(202,180){\line(0,1){10}}
\put(260,180){\line(1,0){10}}
\put(260,180){\line(0,1){10}}
\put(318,180){\line(0,1){10}}
\put(318,180){\line(-1,0){10}}
\put(387,180){\line(1,0){36}}
\put(387,180){\line(0,1){10}}
\put(423,180){\line(0,1){10}}

\put(0,161){\small $ P_{u_{i+1}}\hspace{-3pt}\mlc\hspace{-2pt} (\overline{P}_{u_{i+1}}\hspace{-4pt} \mld\hspace{-2pt}\overline{Q}_{u_{i+1}}\hspace{-1pt})$}
\put(100,161){\small $(R_{v_{i+1}^{1}}\hspace{-5pt}\mld\hspace{-2pt} Q_{v_{i+1}^{1}}\hspace{-1pt})\hspace{-1pt}\mlc\hspace{-1pt} \overline{R}_{v_{i+1}^{1}}$}
\put(419,161){\small ${\cal R}_{w_{i+1}}$}
\put(345,150){\small $\times$}
\put(224,150){\small $\times$}
\put(194,161){\small $(R_{v_{i+1}^{2}}\hspace{-5pt}\mld\hspace{-2pt} Q_{v_{i+1}^{2}}\hspace{-1pt})\hspace{-1pt}\mlc\hspace{-1pt} \overline{R}_{v_{i+1}^{2}}$}
\put(310,161){\small $(R_{v_{i+1}^{k_{i+1}}}\hspace{-4pt}\mld\hspace{-2pt} Q_{v_{i+1}^{k_{i+1}}})\hspace{-1pt}\mlc\hspace{-1pt} \overline{R}_{v_{i+1}^{k_{i+1}}}$}
\put(286,161){\large $\ldots$} 

\put(36,148){\line(0,1){10}}
\put(36,148){\line(-3,-1){15}}
\put(65,148){\line(1,0){69}}
\put(65,148){\line(0,1){10}}
\put(134,148){\line(0,1){10}}
\put(166,148){\line(1,0){36}}
\put(166,148){\line(0,1){10}}
\put(202,148){\line(0,1){10}}
\put(260,148){\line(1,0){10}}
\put(260,148){\line(0,1){10}}
\put(318,148){\line(0,1){10}}
\put(318,148){\line(-1,0){10}}
\put(387,148){\line(1,0){36}}
\put(387,148){\line(0,1){10}}
\put(423,148){\line(0,1){10}}

\put(219,133){$\bullet$} 
\put(219,127){$\bullet$} 
\put(219,121){$\bullet$} 

\put(0,101){\small $ P_{u_{n}}\mlc \hspace{1pt}(\overline{P}_{u_{n}} \mld\hspace{1pt}\overline{Q}_{u_{n}})$}
\put(100,101){\small $(R_{v_{n}^{1}}\mld Q_{v_{n}^{1}})\hspace{1pt}\mlc\hspace{1pt} \overline{R}_{v_{n}^{1}}$}
\put(419,101){\small ${\cal R}_{w_{n}}$}
\put(346,90){\small $\times$}
\put(225,90){\small $\times$}
\put(194,101){\small $(R_{v_{n}^{2}}\mld \hspace{1pt}Q_{v_{n}^{2}})\hspace{1pt}\mlc\hspace{1pt} \overline{R}_{v_{n}^{2}}$}
\put(310,101){\small $(R_{v_{n}^{k_{n}}}\mld\hspace{1pt} Q_{v_{n}^{k_{n}}})\hspace{2pt}\mlc\hspace{2pt} \overline{R}_{v_{n}^{k_{n}}}$}
\put(286,101){\large $\ldots$}

\put(65,88){\line(1,0){69}}
\put(65,88){\line(0,1){10}}
\put(134,88){\line(0,1){10}}
\put(166,88){\line(1,0){36}}
\put(166,88){\line(0,1){10}}
\put(202,88){\line(0,1){10}}
\put(260,88){\line(1,0){10}}
\put(260,88){\line(0,1){10}}
\put(318,88){\line(0,1){10}}
\put(318,88){\line(-1,0){10}}
\put(387,88){\line(1,0){36}}
\put(387,88){\line(0,1){10}}
\put(423,88){\line(0,1){10}}
\put(6,110){\line(3,1){15}}

\put(0,69){\small $ P_{z0}\hspace{1pt}\mlc\hspace{1pt} (\hspace{1pt}\overline{P}_{z0}\hspace{1pt} \mld\hspace{1pt}\overline{Q}_{z0})$}
\put(100,69){\small $(R_{z0}\mld \hspace{1pt}Q_{z0})\hspace{1pt}\mlc\hspace{2pt} \overline{R}_{z0}$}
\put(419,69){\small ${\cal R}_{w_{i }1}$}
\put(345,58){\small $\times$}
\put(224,58){\small $\times$}
\put(194,69){\small $(R_{v_{i }^{1}1}\hspace{-2pt}\mld \hspace{0pt}Q_{v_{i }^{1}1})\hspace{0pt}\mlc\hspace{0pt} \overline{R}_{v_{i }^{1}1}$}
\put(310,69){\small $(R_{v_{i }^{k_{i }}1}\hspace{-1pt}\mld \hspace{0pt}Q_{v_{i }^{k_{i }}1})\hspace{1pt}\mlc \hspace{0pt}\overline{R}_{v_{i }^{k_{i }}1}$}
\put(286,69){\large $\ldots$} 

\put(36,88){\line(0,1){10}}
\put(36,88){\line(-3,-1){31}}
\put(65,56){\line(1,0){69}}
\put(65,56){\line(0,1){10}}
\put(134,56){\line(0,1){10}}
\put(166,56){\line(1,0){36}}
\put(166,56){\line(0,1){10}}
\put(202,56){\line(0,1){10}}
\put(260,56){\line(1,0){10}}
\put(260,56){\line(0,1){10}}
\put(318,56){\line(0,1){10}}
\put(318,56){\line(-1,0){10}}
\put(387,56){\line(1,0){36}}
\put(387,56){\line(0,1){10}}
\put(423,56){\line(0,1){10}}

\put(0,35){\small $ P_{z1}\mlc (\overline{P}_{z1} \mld\overline{Q}_{z1})$}
\put(100,35){\small $(R_{z1}\mld Q_{z1})\mlc \overline{R}_{z1}$}

\put(200,10){\bf Figure 8}

\end{picture}
\end{center}

We have already observed that, if there are only finitely many stages (i.e., the environment only makes finitely many replications in the $!$-component), $\cal K$ is the winner. It remains to understand why $\cal K$ also wins the game  in the cases where there is no last stage. 

Let us use the term ``{\bf line}'' for the rows of a diagram in the style of Figure 7, for the exception of the topmost row consisting of $\overline{\cal P}$ and the bottommost row
consisting of the two ``reserve'' threads. We number the lines of a diagram consecutively from top to bottom. To see that the strategy $\cal K$ is successful, consider any essentially finite (but possibly infinite) bit string $s$ such that the thread $s$ has actually emerged in the $!$-component. Let $v$ be the shortest (possibly empty) initial segment of $s$ containing all ``$1$''s that $s$ contains. So, $s$ looks like $vu$, where the $u$ part entirely consists of (finitely or infinitely many) ``$0$''s. Remember these $v$ and $u$.

Imagine the stage --- let it be stage $\#n$ --- at which the component ${\cal R}_v$ first emerged in the corresponding diagram. At that time, ${\cal R}_v$ will be placed in line $\#n$. Temporarily assuming that $n$ is the last stage that ever emerges in the play, and 
analyzing\footnote{In fact, we have already implicitly undertaken such an analysis earlier when observing that $\cal K$ is successful in the cases of finitely many stages.}  the diagram for stage $\#n$, one can easily see that 
\begin{equation}\label{just}\begin{array}{l}
\mbox{\em if $\cal K$ loses in thread $v$, then it either wins in the recurrence-free component, or in}\\
\mbox{\em one of the threads $x$ of one of the two $?$-components in line $\#m$ for some $1\leq m\leq n$}.
\end{array}
\end{equation}
Now back to the case of infinitely many stages. An analysis of the situation, which the reader should partly undertake on his or her own 
 after internalizing our construction, reveals that,  in the continuously evolving diagram, everything --- including matching arrangements --- in lines $\#1$ through $\#n$ will remain ``the same'', and the only changes that may be occurring in those lines  are that ``$0$''s will  be added to the names of the threads of the right (but not the left) $?$-component and of the thread of the $!$-component.\footnote{If we imagine an infinite diagram corresponding to the entire play, its lines $\#1$ through $\#n$ will remain ``essentially the same'' as in the diagram for stage $\#n$.} The above changes are of no relevance to our earlier argument for (\ref{just}), in which we now should simply replace ``$v$'' by 
``$vu$'' and ``$x$'' by ``$xy$'', where $y$, just like $u$, is a certain (possibly infinite) string entirely consisting of ``$0$''s; as $xy$ is essentially finite, we again find that $\cal K$  wins the overall game. 

Since the above $s$ was an arbitrary essentially finite string, we conclude that, if $\cal K$ loses in the $!$-component, then it wins the overall game. And, of course, it also wins if it does not lose in the $!$-component.

\section{Uncountable branching recurrence does not validate long production}

An {\bf enumeration game} is a game where any natural number, identified with its decimal representation, is a legal move by either player at any time (and there are no other legal moves). This way, either player can be seen to enumerate a set of numbers --- the numbers made by it as moves during the play. The winner in a (legal) play of an enumeration game only depends on the two sets enumerated this way. That is, what matters is only {\em what} moves have been eventually made and by whom, regardless of {\em when} (in what order) and {\em how many times} those moves were made. 

Let us rewrite long production (\ref{n2}) in the following form:
   \begin{equation}\label{ee1}
\overline{P}^1\ \mld \ ?\bigl(P^2\mlc (\overline{P}^3\mld \overline{P}^4)\bigr)\ \mld\  ?\bigl((P^5\mld P^6)\mlc \overline{P}^7\bigr)\ \mld \ !\hspace{1pt}{P}^8.
\end{equation}
Here we have used the superscripts $1$ through $8$ to differentiate between the different occurrences of $P$, and otherwise it is understood that each $P^i$ simply means $P$. Throughout this section, $!$ stands for $\st$, and $?$ for its  dual  $\cost$ ($=\gneg\st\gneg$).

Let us fix an HPM $\cal H$ as an arbitrary strategy of the machine. We want to construct a {\em counterstrategy} $\cal D$ --- in the same sense as in Section \ref{s4} --- such that, when the environment follows it, $\cal H$ loses (\ref{ee1})  with $P$ interpreted as a certain enumeration game. 

As was done earlier, terminologically we identify formulas with games. This does not create any confusion: because $P$ is going to be interpreted as an enumeration game anyway, the legal moves of it --- even if not the winner --- are known even before we actually define that interpretation.   

As in the preceding section, we refer to $\overline{P}^1$ as the {\em recurrence-free component}, refer to $?\bigl(P^2\mlc (\overline{P}^3\mld \overline{P}^4)\bigr)$ as the {\em left $?$-component}, refer to $?\bigl((P^5\mld P^6)\mlc \overline{P}^7\bigr)$ as the {\em right $?$-component} and refer to $!\hspace{1pt}{P}^8$ as the {\em $!$-component}. A {\bf fresh move} means a number that has not yet been chosen in the play by either player as a move in any subgame $P^{i}$ or $\overline{P}^j$. 

The work of the counterstrategy $\cal D$ is very simple. It consists in repeating, over and over infinitely many times, the following  routine:

\begin{description}
  \item[Step (i):] Split (make a replicative move in) each thread of the $!$-component.
  \item[Step (ii):] Make a fresh move\footnote{Several moves are made during this step, and it is understood that the condition of their ``freshness'' implies that they are  different not only from all earlier moves/numbers, but also from each other.} in the recurrence-free component, in all three subgames of each thread of the left $?$-component, in all three subgames of each thread of the right $?$-component, and in each thread of the $!$-component.
\end{description}     

The rest of our discussion is in the context of the run/play generated in the scenario where $\cal H$ plays as $\pp$ and the environment ($\oo$) acts according to strategy $\cal D$. It is obvious that (as along as $P$ is interpreted as an enumeration game) $\cal D$ plays legally. As always, we safely assume that its adversary $\cal H$ never makes illegal moves, either.  

Let $\mathbb{T}^{?}_{l}$ be the set of all threads that eventually emerge in the left $?$-component, $\mathbb{T}^{?}_{r}$ be the set of all threads that eventually emerge in the right $?$-component, and $\mathbb{T}^{!}$ be the set of all threads that eventually emerge in the  $!$-component. As an aside, notice that $\mathbb{T}^{!}$ is exactly the set of all infinite bit strings. The same is not necessarily the case for  $\mathbb{T}^{?}_{l}$ and 
 $\mathbb{T}^{?}_{r}$ though, which may even be finite. 

What we here --- by some abuse of terminology --- call {\bf literals} are the following objects (technically, these ``objects'' are nothing but superscript/subscript pairs, except for one case where we only have a superscript):

\begin{itemize}
  \item $\overline{P}^{1}$;
  \item $P^{2}_{w}$,  $\overline{P}^{3}_{w}$ and $\overline{P}^{4}_{w}$ for each $w\in\mathbb{T}^{?}_{l}$;
  \item $P^{5}_{w}$, $P^{6}_{w}$ and $\overline{P}^{7}_{w}$ for each $w\in\mathbb{T}^{?}_{r}$;
  \item $P^{8}_{w}$  for each $w\in\mathbb{T}^{!}$.
\end{itemize} 

The literals of the form $P^{2}_{w}$, $P^{5}_{w}$, $P^{6}_{w}$,  $P^{8}_{w}$ are {\bf positive}, and all other literals (the ones with an overline) are {\bf negative}.

With each literal $L$ we associate its {\bf content}. The latter is a pair $(L_\pp,L_\oo)$,  where $L_\pp$ is the set of all moves (numbers) made (enumerated) by $\cal H$ in the corresponding (sub)game, and $L_\oo$ is the set of all moves made by $\cal D$ there. Here the meaning of ``the corresponding (sub)game''  must be clear: in the case of $\overline{P}^{1}$, this is the (sub)game played in the recurrence-free component; in the case of 
$P^{2}_{w}$ (resp. $\overline{P}^{3}_{w}$,  $\overline{P}^{4}_{w}$), this is the (sub)game played in the $P^{2}$  (resp. $\overline{P}^{3}$,  $\overline{P}^{4}$) part of thread $w$ of the left $?$-component; in the case of 
$P^{5}_{w}$ (resp. $P^{6}_{w}$,  $\overline{P}^{7}_{w}$), this is the (sub)game played in the $P^{5}$  (resp. $P^{6}$,  $\overline{P}^{7}$) part of thread $w$ of the right $?$-component; and, in the case of $P^{8}_{w}$, this is the (sub)game played in thread $w$ of the  $!$-component.
 
\begin{lemma}\label{july5a}
If $L$ and $M$ are two distinct literals (meaning that either their superscripts, or their subscripts, or both, are non-identical), then  their contents are also distinct.  
\end{lemma}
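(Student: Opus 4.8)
The plan is to show, by a careful case analysis on how $L$ and $M$ can differ, that in every case the run generated by $\cal D$ forces the corresponding contents to diverge. The driving observation is that the counterstrategy $\cal D$, at Step (ii) of each iteration of its routine, injects a \emph{fresh} move into every one of the currently active subgames simultaneously, where ``fresh'' means ``distinct from all numbers played so far anywhere, and distinct from the other fresh numbers played at the same step.'' Hence each iteration of the routine deposits into each active subgame a number that is deposited nowhere else. The content $(L_\pp,L_\oo)$ records, in its second coordinate $L_\oo$, exactly the set of numbers $\cal D$ has enumerated in the subgame named by $L$; so to separate two contents it suffices to exhibit a number that $\cal D$ placed into one of the two subgames but not the other.

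First I would treat the case where $L$ and $M$ have \emph{the same subscript $w$ but different superscripts} (or where one of them is $\overline{P}^1$ and the other has a superscript in $\{2,\ldots,8\}$, which is the degenerate ``only a superscript'' case). Here the two literals name two genuinely different parallel subgames of (\ref{ee1}) — different conjuncts/disjuncts inside the same thread, or inside threads of different $?$-components, or the recurrence-free component versus something else. By construction, at the first Step (ii) that occurs after both of these subgames exist, $\cal D$ makes a fresh move in each of them, and these two fresh moves are, by the freshness stipulation applying within a single step, different from each other. That witnessing number lies in exactly one of $L_\oo$, $M_\oo$, so the contents differ.

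Next I would treat the case where $L$ and $M$ have \emph{the same superscript but different subscripts} $w\neq w'$. Since the subscripts differ, they name two distinct threads of the same ($?$- or $!$-) component; because these are distinct threads, there is a stage at which they separate, i.e.\ a replicative move in that component that splits their common ancestor into two descendants, one of which is an initial segment of $w$ and the other an initial segment of $w'$. After that split, at every subsequent Step (ii), $\cal D$ injects a fresh number into (the $P^i$ part of) thread $w$ and, separately, a fresh number into (the $P^i$ part of) thread $w'$, and these are again mutually distinct by freshness. So again some number witnesses that $L_\oo\neq M_\oo$. The remaining case — both superscript and subscript differ — is subsumed by either of the two arguments above (pick the coordinate that differs and run the corresponding argument), so the lemma follows. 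I would close by remarking that one must be slightly careful to invoke Step (ii) \emph{after} both named subgames have come into existence, which is legitimate precisely because $\cal D$ repeats its routine infinitely often, so infinitely many such Step (ii) events occur.

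The main obstacle I anticipate is purely bookkeeping rather than conceptual: making precise the claim that a replicative move in the $!$-component is exactly mirrored by $\cal D$'s splitting behavior and that the thread names $w$, $w'$ appearing as literal subscripts really do correspond to threads that separate at a well-defined stage, so that ``after the split, every later Step (ii) feeds both threads'' is literally true. This requires unwinding the definition of $\cal D$'s routine (Steps (i) and (ii)) against the formal notion of thread/leaf from Section 2, but it is routine; the freshness discipline does all the real work.
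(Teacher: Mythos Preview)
Your proposal is correct and follows essentially the same approach as the paper: distinct literals name distinct subgames, and since $\cal D$ repeatedly injects fresh numbers into every currently existing subgame, the $\oo$-components $L_\oo$ and $M_\oo$ of the two contents must differ. The paper compresses this into three sentences without the case analysis, but your more explicit breakdown by superscript/subscript is the same argument unpacked.
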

\begin{proof} If $L$ and $M$ are two distinct literals, the (sub)games they represent happen to be in different parts of the overall game (including the possibility of being in different threads). But $\cal D$ keeps making fresh moves in all existing components.  So, some number (in fact, infinitely many numbers) is in the set enumerated by $\cal D$ in one (sub)game but not in the set enumerated by $\cal D$  in the other (sub)game.
\end{proof}

Let $L$ and $M$ be literals. We say that $L$ {\bf matches} $M$ iff,  where $(L_\pp,L_\oo)$ and  $(M_\pp,M_\oo)$ are the contents of $L$ and $M$, 
 we have $L_\pp=M_\oo$ and $L_\oo=M_\pp$. Next, we 
say that $L$ is a {\bf threadmate} of $M$ iff $L\not=M$ and one of the following conditions is satisfied:
\begin{itemize}
  \item For some $w\in \mathbb{T}^{?}_{l}$, both $L$ and $M$ are among $P^{2}_{w}, \overline{P}^{3}_{w},\overline{P}^{4}_{w}$. 
  \item For some $w\in \mathbb{T}^{?}_{r}$, both $L$ and $M$ are among $P^{5}_{w}, P^{6}_{w},\overline{P}^{7}_{w}$. 
\end{itemize}  

We define a {\bf chain} as a nonempty finite sequence  $L_1,\ldots,L_n$ of literals satisfying the following conditions:
\begin{itemize}
  \item $L_1$ (and only $L_1$) is $\overline{P}^{1}$.
  \item For each odd $i$ with $1\leq i<n$, $L_{i+1}$ matches $L_{i}$.
  \item For each even $i$ with $1\leq i<n$, $L_{i+1}$ is a threadmate of $L_{i}$.
\end{itemize}  

Let 
\[\overline{P}^{i_1}, \tilde{P}^{i_2}_{w_2}, \tilde{P}^{i_3}_{w_3}, \tilde{P}^{i_4}_{w_4},\ldots,\tilde{P}^{i_n}_{w_n}\]
be a chain (here $\tilde{P}$ stands for $P$ with or without an overline;  and, of course, $i_1=1$).
The {\bf type} of such a chain is the sequence $i_1,i_2,i_3,i_4,\ldots,i_n$ of the superscripts of the above-displayed literals.

\begin{lemma}\label{dec21a}
There are no two non-identical chains that have the same type. 
\end{lemma}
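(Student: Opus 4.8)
The plan is to prove Lemma~\ref{dec21a} by induction on the length $n$ of the chains, showing simultaneously that if two chains $\bar{P}^{i_1},\tilde{P}^{i_2}_{w_2},\ldots,\tilde{P}^{i_n}_{w_n}$ and $\bar{P}^{i_1},\tilde{P}^{i_2'}_{w_2'},\ldots,\tilde{P}^{i_n'}_{w_n'}$ have the same type, then they are literally the same sequence (same superscripts and same subscripts at every position). The base case $n=1$ is immediate: a chain of length $1$ is just $\bar{P}^1$, and the type determines it completely.

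For the inductive step, suppose the two chains have the same type $i_1,\ldots,i_n$ with $n>1$, and by the induction hypothesis their initial segments of length $n-1$ coincide — in particular $\tilde{P}^{i_{n-1}}_{w_{n-1}}=\tilde{P}^{i_{n-1}'}_{w_{n-1}'}$, so I may write this common literal as $L$. Now split on the parity of $n-1$. If $n-1$ is odd, then both $\tilde{P}^{i_n}_{w_n}$ and $\tilde{P}^{i_n'}_{w_n'}$ match $L$. By Lemma~\ref{july5a}, a literal matching $L$ is uniquely determined by its content (which is forced to be the ``swap'' $(L_\oo,L_\pp)$ of $L$'s content), hence the two $n$-th literals have the same content, hence by Lemma~\ref{july5a} again they are the same literal. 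If $n-1$ is even, then both $\tilde{P}^{i_n}_{w_n}$ and $\tilde{P}^{i_n'}_{w_n'}$ are threadmates of $L$; since they share the common superscript $i_n$ (by the type hypothesis) and the threadmate relation forces them into the same thread $w$ as $L$ (a threadmate of a subscripted literal lives in the very same thread $w\in\mathbb{T}^?_l$ or $\mathbb{T}^?_r$, because the thread name is the subscript of $L$), we get $w_n=w_{n-1}=w_{n-1}'=w_n'$; combined with equal superscripts this gives $\tilde{P}^{i_n}_{w_n}=\tilde{P}^{i_n'}_{w_n'}$. Either way the $n$-th elements agree, completing the induction.

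The one point that deserves care — and is the main (if modest) obstacle — is the threadmate case: I must make sure that ``threadmate of $L$'' together with the fixed superscript $i_n$ really pins down the literal. The definition of threadmate says $L$ and its threadmate are two distinct members of $\{P^2_w,\bar{P}^3_w,\bar{P}^4_w\}$ (or of $\{P^5_w,P^6_w,\bar{P}^7_w\}$) for \emph{one and the same} $w$; so the subscript of a threadmate of $L$ is forced to equal the subscript of $L$, and within that triple the superscript singles out the literal uniquely (the three superscripts $2,3,4$ — resp.\ $5,6,7$ — are distinct). Since the type of the chain supplies exactly this superscript $i_n$, and $i_n\neq i_{n-1}$ is not even needed (distinctness of consecutive literals in a chain is automatic from the definitions), we are done. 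The matching case is even cleaner, resting directly on Lemma~\ref{july5a} via the observation that a matching partner's content is the coordinate-swap of $L$'s content and is therefore uniquely realized.

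Thus in all cases the two chains of equal type coincide position by position, which is the statement of the lemma. $\Box$
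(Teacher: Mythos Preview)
Your proof is correct and is essentially a detailed unpacking of the paper's one-line proof (``Rather immediately from Lemma~\ref{july5a}''): the induction on the chain length, with the matching step resting on Lemma~\ref{july5a} (equal contents force equal literals) and the threadmate step resting on the observation that threadmates share a subscript while the type supplies the superscript, is exactly the intended argument.
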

\begin{proof} Rather immediately from  Lemma \ref{july5a}. 
\end{proof}

Let us say that a literal $L$ is {\bf reachable} iff there is a chain $L_1,\ldots,L_n$ with $L_n=L$. 

\begin{lemma}\label{dec21b}
There are only countably many reachable literals.
\end{lemma}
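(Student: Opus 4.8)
The plan is to derive this essentially immediately from Lemma~\ref{dec21a}, using the single additional observation that the superscripts occurring in literals are drawn from the finite set $\{1,2,3,4,5,6,7,8\}$. So the countability of the reachable literals will reduce to the trivial fact that the set of finite sequences over a finite alphabet is countable.

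First I would record that the type of a chain is, by definition, a finite sequence all of whose entries lie in $\{1,\ldots,8\}$ (with first entry $1$). The collection $\bigcup_{n\geq 1}\{1,\ldots,8\}^{n}$ of all such sequences is a countable union of finite sets, hence countable. By Lemma~\ref{dec21a}, two distinct chains cannot have the same type, i.e.\ the map sending a chain to its type is injective. Composing, I conclude that there are at most countably many chains in total. (Note that no bound on the length of chains is needed for this: a countable union of finite sets is countable regardless of whether chains can be arbitrarily long.)

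Next I would observe that, by the definition of reachability, every reachable literal $L$ is the last element $L_n$ of some chain $L_1,\ldots,L_n$. Thus the assignment taking a chain to its final literal is a surjection from the set of all chains onto the set of all reachable literals; since the domain is countable, so is the image. This gives the lemma.

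I do not expect a genuine obstacle here; the only point worth making explicit is why Lemma~\ref{dec21a} is available in the needed strength, namely that it rests on Lemma~\ref{july5a} (distinct literals have distinct contents), which is what forces the ``matches'' and ``threadmate'' requirements to determine a chain uniquely once its sequence of superscripts is fixed. Given that, together with the finiteness of the superscript alphabet, there is nothing further to carry out.
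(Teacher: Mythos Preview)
Your proof is correct and follows essentially the same approach as the paper's: both argue that types of chains are countable (being finite sequences over $\{1,\ldots,8\}$), invoke Lemma~\ref{dec21a} to deduce there are only countably many chains, and conclude via the surjection from chains onto reachable literals given by taking the last element. Your version is simply a more explicit rendering of the paper's two-sentence argument.
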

\begin{proof} The number of all possible types of chains is countable, because every type is a finite sequence of numbers. And, in view of Lemma \ref{dec21a}, all reachable literals can be listed by listing the (unique) types of the corresponding chains. 
\end{proof}

We select an interpretation $^*$ that interprets $P$ as the enumeration game such that, whenever $(S_\pp,S_\oo)$ is the pair of the sets enumerated (while playing $P^*$) by $\pp$ and $\oo$, respectively, $\pp$ is the winner if and only if $(S_\pp,S_\oo)$ is the content of some reachable positive literal or $(S_\oo,S_\pp)$ is the content of some reachable negative literal. 

Implicitly relying on Lemma \ref{july5a}, we now claim that $\cal H$ loses the overall game under this interpretation. 
To see why, first observe that there is a thread $u$ (in fact, uncountably many such threads) in the $!$-component such that $P^{8}_{u}$ is not reachable. This is so because there are uncountably many threads in the $!$-component, of which, however, according to Lemma \ref{dec21b}, only countably many are reachable. By our choice of interpretation, $\cal H$ loses the game $P$ in thread $u$, meaning that it loses the entire $!$-component. Next, $\cal H$ loses the recurrence-free component because $\overline{P}^1$ is reachable yet negative. Next, $\cal H$ can be seen to lose in every thread $w$ of the left $?$-component. Namely, if one of (which is the same as to say that {\em all of}) the three literals $P^{2}_{w},\overline{P}^{3}_{w},\overline{P}^{4}_{w}$ is reachable, then $\cal H$ loses because it loses in the $\overline{P}^{3}\mld\overline{P}^{4}$ part of the thread; otherwise $\cal H$ loses because it loses in the $P^{2}$ part. Finally, the threads of the right $?$-component can be handled in a similar way.

\end{document}